\theoremstyle{plain}
\newtheorem{theorem}{Theorem}
\newtheorem{conjecture}[theorem]{Conjecture}
\newtheorem{proposition}[theorem]{Proposition}
\newtheorem{lemma}[theorem]{Lemma}
\newtheorem{corollary}[theorem]{Corollary}
\newtheorem{observation}[theorem]{Observation}
\theoremstyle{definition}
\newtheorem{remark}[theorem]{Remark}
\DeclareMathOperator{\conv}{conv}
\DeclareMathOperator{\cone}{cone}
\DeclareMathOperator{\cost}{cost}
\title[Proof of the Fingerhut conjecture]{Intersecting ellipses induced by a max-sum matching}
\author[P.~Barabanshchikova, A.~Polyanskii]{{Polina Barabanshchikova and Alexandr~Polyanskii}}
\address{Polina Barabanshchikova,
\newline\hphantom{iii} Moscow Institute of Physics and Technology, Institutskiy per. 9, Dolgoprudny, Russia 141700
}
\email{\href{mailto:barabanshchikova.piu@phystech.edu}{barabanshchikova.piu@phystech.edu}}
\address{Alexandr Polyanskii,
\newline\hphantom{iii} Institute of Mathematics and Informatics, Bulgarian Academy of Sciences, Bulgaria, Sofia 1113, Acad. G. Bonchev Str., Bl. 8
\newline\hphantom{iii} Saint Petersburg State University, Line 14 of Vasilyevsky island, 29, St. Petersburg, Russia 199178
}
\email{\href{mailto:alexander.polyanskii@gmail.com}{alexander.polyanskii@gmail.com}}
\urladdr{\url{http://polyanskii.com}}
\keywords{Infinite descent, convex optimization, Tverberg theorem, max-sum matching, alternating cycle}
\subjclass[2010]{51K99, 05C50, 51F99, 52C99, 05A99}
\thanks{
A.P. is supported by the Bulgarian Ministry of Education and Science, Scientific Programme "Enhancing the Research Capacity in Mathematical Sciences (PIKOM)", No. DO1-67/05.05.2022, and by the grant of Russian Science Foundation №22-11-00131, \url{https://rscf.ru/project/22-11-00131/}. Also, A.P. is a Young Russian Mathematics award winner and would like to thank its sponsors and jury.
}
\pgfplotsset{compat=1.17}
\begin{document}

\thispagestyle{empty}

\begin{abstract}
For an even set of points in the plane, choose a \textit{max-sum matching}, that is, a perfect matching maximizing the sum of Euclidean distances of its edges.
For each edge of the max-sum matching, consider the ellipse with foci at the edge's endpoints and eccentricity $\sqrt 3 / 2$.
Using an optimization approach, we prove that the convex sets bounded by these ellipses intersect, answering a Tverberg-type question of Andy Fingerhut from 1995.

\end{abstract}

\maketitle
\section{Introduction}

The celebrated Tverberg theorem~\cite{Tverberg1966} is one of the pearls in combinatorial geometry. It asserts that for any set of $(d+1)(r-1)+1$ points in $\mathbb R^d$, there is a partition with $r$ parts whose convex hulls intersect. One of the ideas for proving it relies on an optimization approach combined with the method of infinite descent. For example, the proofs of the Tverberg theorem by Tverberg and Vre{\'{c}}ica~\cite{Tverberg1993} and by Roudneff~\cite{Roudneff2001} give a good illustration of this idea.

The following variation of the Tverberg theorem was proposed in~\cites{huemer2019matching, soberon2020tverberg} and has been actively studied in recent years. For two points $a,b\in \mathbb R^d$, we denote by $B(ab)$ the closed Euclidean ball for
which the segment $ab$ is its diameter. Let $G$ be a graph whose vertex set is a finite set of points in~$\mathbb R^d$. We say that $G$ is a \textit{Tverberg graph} if
\[
    \bigcap_{xy\in E(G)} B(xy) \neq \emptyset.
\]

One of the problems arising in this context is to find a particular Tverberg graph (perfect matching, Hamiltonian cycle, etc.) for a finite set $S\subset \mathbb R^d$. For that, one can choose a proper function over a particular family of graphs with vertex set $S$ such that a graph maximizing it is a Tverberg graph. For example, it was shown in the papers~\cite{huemer2019matching} and~\cite{Pirahmad2022}*{Theorem 4} that a red-blue perfect matching (for any $n$ red and $n$ blue points in $\mathbb R^d$) maximizing the sum of squared Euclidean distances between the matched points is a Tverberg graph. Also, very recently, the authors of~\cite{Affash2022piercing} proved that a spanning tree (for a finite point set in the plane) maximizing the sum of edge distances is a Tverberg graph.

For an even set $S$ of points in $\mathbb R^d$, let $\mathcal M$ be a perfect matching and put $\cost\;(\mathcal M)\coloneqq \sum_{ab\in \mathcal M} \|a-b\|$. A \textit{max-sum matching} of $S$ is a perfect matching $\mathcal M$ with the maximum value of $\cost(\mathcal M)$. In 2021, Bereg et al.~\cite{bereg2019maximum} showed that for any even set of points in the plane, a \textit{max-sum matching} is a~Tverberg graph. Their motivation was to prove a conjecture of Andy Fingerhut~\cite{Eppstein1995} from~1995.

\begin{conjecture}
    \label{conjecture:fingerhut}
    Let a matching $\{a_ib_i : i=1,\dots,n\}$ be max-sum for a set of $2n$ points in the plane. Then there exists a point $o$ in the plane such that
\begin{equation*}
    \|a_i-o\|+\|b_i-o\|\leq \frac{2}{\sqrt 3} \|a_i-b_i\| \text{\quad for all $i=1,\dots,n$.}
\end{equation*}
\end{conjecture}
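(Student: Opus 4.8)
The plan is to recast the conjecture as a convex optimization problem and read off the desired centre from an optimality condition. For a point $x$ in the plane put $\phi_i(x)=\|a_i-x\|+\|b_i-x\|-\tfrac{2}{\sqrt3}\|a_i-b_i\|$, so that the $i$-th elliptical region is exactly $\{\phi_i\le 0\}$, and set $\Phi(x)=\max_{1\le i\le n}\phi_i(x)$. Each $\phi_i$ is convex and $\Phi$ is convex and coercive, hence $\Phi$ attains its minimum at some point $o$, and the conjecture is equivalent to $\Phi(o)\le 0$. Assume for contradiction that $\Phi(o)=t>0$ and let $I=\{i:\phi_i(o)=t\}$ be the set of active edges. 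For active $i$ one has $o\notin\{a_i,b_i\}$ (otherwise $\phi_i(o)=(1-\tfrac{2}{\sqrt3})\|a_i-b_i\|<0$), so $\phi_i$ is differentiable at $o$ with gradient $w_i=\frac{o-a_i}{\|o-a_i\|}+\frac{o-b_i}{\|o-b_i\|}$, the outward normal to the level ellipse through $o$. The optimality condition $0\in\partial\Phi(o)=\conv\{w_i:i\in I\}$ together with Carath\'eodory's theorem in the plane lets me assume that $|I|\le 3$ and that $0=\sum_{i\in I}\lambda_i w_i$ with $\lambda_i>0$ and $\sum_i\lambda_i=1$.

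The eccentricity $\sqrt3/2$ enters through one sharp planar fact, which I would isolate as a lemma: the set of points from which the segment $a_ib_i$ subtends an angle of at least $120^\circ$ is contained in the closed region $\{\phi_i\le 0\}$, the two sets being tangent at the endpoints of the minor axis. Indeed, a point seeing $a_ib_i$ at a fixed angle $\theta$ lies on a circular arc on which $\|a_i-x\|+\|b_i-x\|$ attains its maximum $\|a_i-b_i\|/\sin(\theta/2)$ at the apex, and this maximum equals $\tfrac{2}{\sqrt3}\|a_i-b_i\|$ exactly when $\theta=120^\circ$. Consequently, since $o\notin\{\phi_i\le 0\}$ for active $i$ (as $\phi_i(o)=t>0$), the angle $\theta_i:=\angle a_i o b_i$ satisfies $\theta_i<120^\circ$, and therefore $\|w_i\|=2\cos(\theta_i/2)>1$. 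This strict inequality, valid for every active edge, is the quantitative consequence of $t>0$ that I would feed into the final step, and it is where the constant $2/\sqrt3$ is genuinely used.

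To finish I would convert the balancing relation $\sum_{i\in I}\lambda_iw_i=0$ into an \emph{alternating cycle} on the active edges whose reconnection strictly increases the total length, contradicting that the matching is max-sum (only active endpoints are moved, so the result is again a perfect matching of the same point set, whose cost is bounded by $\cost(\mathcal M)$). As $|I|\le 3$, there are three cases. The case $|I|=1$ is impossible because $w_i\ne 0$ for active $i$. If $|I|=2$ then $w_1=-w_2$ are antiparallel, the two edges lie on opposite sides of $o$, and writing $a_i=o-\|o-a_i\|u_i$, $b_i=o-\|o-b_i\|v_i$ one compares $\|a_1-b_1\|+\|a_2-b_2\|$ with the two crossing reconnections and shows, via the law of cosines and the bound $\|w_i\|>1$, that one of them is strictly longer. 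If $|I|=3$ then $0$ lies in the interior of $\conv\{w_1,w_2,w_3\}$, the three edges are cyclically arranged about $o$, and a hexagonal alternating cycle $a_1,b_{\sigma(1)},a_2,b_{\sigma(2)},a_3,b_{\sigma(3)}$ must exceed $\sum_i\|a_i-b_i\|$ for a suitable matching $\sigma$ of the $a$'s to the $b$'s.

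I expect the three-edge case to be the main obstacle: one must establish a \emph{single} strict inequality — that the best reconnection beats the current matching — uniformly over all admissible placements of the three edges, i.e. over all $p_i=\|o-a_i\|$, $q_i=\|o-b_i\|$ and all directions $u_i,v_i$ constrained by $\theta_i<120^\circ$ and $\sum_i\lambda_iw_i=0$. I anticipate taming this many-parameter inequality by \emph{infinite descent}: starting from a putative counterexample, one continuously moves the points so as to decrease the reconnection deficit while preserving the constraints, pushing the configuration toward the symmetric extremal one (the three edges meeting $o$ at mutual angles $120^\circ$, precisely the equality case of the lemma), where the inequality can be checked directly. Verifying that such a descent step always exists and strictly decreases the relevant potential — and handling the degenerate positions where some of the vectors $u_i,v_i$ align — is the part I expect to require the most care.
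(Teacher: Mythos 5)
Your setup is sound and closely parallels the paper's: you minimize a convex max-function, use the subdifferential optimality condition plus Carath\'eodory to reduce to at most three ``active'' edges, invoke the $120^\circ$-lens fact (where $2/\sqrt3$ enters), and aim to contradict maximality via an alternating-cycle exchange. (The paper normalizes by $\lVert a_i-b_i\rVert$ and works with the ratio $h_{ab}$ rather than your difference $\phi_i$; both formulations support the exchange computation, though you should also dispose of degenerate edges $a_i=b_i$ first, as the paper does with a one-line exchange argument.) The genuine gap is exactly where you flag it: the three-edge case. Your claim that ``a hexagonal alternating cycle $a_1,b_{\sigma(1)},a_2,b_{\sigma(2)},a_3,b_{\sigma(3)}$ must exceed $\sum_i\lVert a_i-b_i\rVert$ for a suitable $\sigma$'' is not established, and it cannot be rescued by angle bounds alone. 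The paper makes this obstruction explicit: defining a pair $xy$ to be ``red'' whenever the origin lies strictly inside $\mathcal E_\lambda(xy)$ (a condition strictly weaker than $\angle xoy\geq 2\pi/3$), there is a six-vertex configuration (Figure~\ref{fig:im4}, with red edges $a_1a_2,a_1b_2,b_1a_3,b_1b_3$) in which \emph{no} alternating cycle exists at all, even though each vertex meets a red edge. So no choice of $\sigma$ yields a certified length increase, and your proposed infinite descent toward the symmetric ($120^\circ$-spaced) configuration is not a proof but a hope; you give no potential function, no descent step, and no argument that the flow preserves the constraints $\sum_i\lambda_i w_i=0$, $\theta_i<2\pi/3$.

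What the paper does instead at this point is the actual crux, and it is absent from your proposal. After the combinatorial reduction to the Figure~\ref{fig:im4} graph, it uses the cone ordering of the rays $oa_i,ob_i$ to show the segments $a_2b_2$ and $a_3b_3$ are disjoint; since they form a max-sum matching of $\{a_2,b_2,a_3,b_3\}$, these four points are not in convex position, say $b_2\in\conv\{a_2,a_3,b_3\}$. Then the ray $ob_2$ meets the segments $a_2b_3$ and $a_2a_3$ in interior points $p,q$ with $b_2\in pq$, and the angle-bisector identity $f(x,y)=\lVert x\rVert/\lVert x-\ell_{xy}\rVert$ (Lemma~\ref{F_lemma}) together with its strict monotonicity along a segment (Corollary~\ref{F_cor}) forces
\[
\lambda=f(a_2,b_2)\;\geq\;\min\{f(a_2,p),f(a_2,q)\}\;>\;\min\{f(a_2,b_3),f(a_2,a_3)\}\;\geq\;\lambda,
\]
a contradiction. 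Note this argument uses the bisector foot $\ell_{xy}$, not the gradient norms $\lVert w_i\rVert>1$ that your sketch relies on; the quantitative consequence of optimality that the paper exploits in the hard case is the collinearity of $\ell_{a_2p},\ell_{a_2b_2},\ell_{a_2q}$ along the ray $o\ell_{a_2b_2}$. Your two-edge case also rests on an unverified ``law of cosines'' comparison, but that case is genuinely easy (with four vertices an alternating cycle always exists once every vertex meets a red edge, and otherwise the paper's Lemma~\ref{main_lemma} applies); the three-edge obstruction is the part that requires a new idea, and your proposal does not supply it.
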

Andy Fingerhut~\cite{Eppstein1995} observed that, by the Helly theorem, it is enough to show this conjecture for 6 points, that is, for $n=3$. Also, he mentioned that the factor \( 2/\sqrt 3 \) is the minimum possible. Indeed, consider a six-point set consisting of vertices of two coinciding equilateral triangles. A max-sum matching of this set corresponds to the sides of these triangles. So the center of these equilateral triangles is the only point $o$ satisfying Conjecture~\ref{conjecture:fingerhut}. Hence, one cannot decrease the factor $2/\sqrt{3}$ by a smaller one.

The motivation of Andy Fingerhut was to confirm a conjecture of Subhash Suri~\cite{Suri1998} in designing communication
networks on minimum Steiner stars. For a finite set $S$ of points in the plane, denote by $y$ a point minimizing the sum $\sum_{x\in S} \|y-x\|$. We write $t(S)$ for the corresponding minimum sum. A geometric graph with all edges connecting $y$ with the points of $S$ is called a \textit{minimum Steiner star}.

\begin{conjecture}[S. Suri~\cite{Suri1998}]
    \label{conjecture of suri on minimum tree and max matchin}
    For an even set $S$ of points in the plane with a max-sum matching $\mathcal M$, we have 
\(
    t(S)\leq \frac{2}{\sqrt{3}}\;\cost(\mathcal M) .
\)
\end{conjecture}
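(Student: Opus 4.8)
The plan is to obtain Conjecture~\ref{conjecture of suri on minimum tree and max matchin} as an immediate consequence of the Fingerhut conjecture (Conjecture~\ref{conjecture:fingerhut}), which I am allowed to assume. The whole point is that $t(S)$ is defined as a \emph{minimum} over all candidate centers, while the Fingerhut conjecture supplies one concrete center $o$ that is simultaneously good for every matched pair; bounding the minimum by the value attained at this specific $o$ is all that is needed.

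First I would apply Conjecture~\ref{conjecture:fingerhut} to the max-sum matching $\mathcal M = \{a_ib_i : i = 1,\dots,n\}$ of $S$, producing a point $o$ with $\|a_i - o\| + \|b_i - o\| \leq \frac{2}{\sqrt 3}\,\|a_i - b_i\|$ for each $i$. Since $S = \{a_1,b_1,\dots,a_n,b_n\}$ and $t(S)$ is the infimum of $\sum_{x\in S}\|y-x\|$ over all choices of $y$, evaluating this sum at the particular center $y = o$ furnishes the upper bound
\[
    t(S) \;\leq\; \sum_{x\in S}\|o-x\| \;=\; \sum_{i=1}^n\bigl(\|a_i-o\|+\|b_i-o\|\bigr).
\]
Summing the per-edge Fingerhut inequalities over $i$ and substituting then yields
\[
    t(S) \;\leq\; \frac{2}{\sqrt 3}\sum_{i=1}^n\|a_i-b_i\| \;=\; \frac{2}{\sqrt 3}\,\cost(\mathcal M),
\]
which is exactly the claimed inequality.

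There is no real obstacle in this derivation itself; the entire difficulty is displaced onto the Fingerhut conjecture, namely the existence of a single center $o$ satisfying all $n$ eccentricity constraints at once. That existence is the genuine theorem of the paper, to be established by the optimization and infinite-descent method announced in the abstract. I would also remark that the constant $2/\sqrt 3$ transfers verbatim through this argument, so the six-point configuration given by two coinciding equilateral triangles already certifies that it cannot be improved for the Suri bound either.
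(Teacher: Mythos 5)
Your derivation is correct: given Conjecture~\ref{conjecture:fingerhut} (which the paper establishes as Theorem~\ref{theorem: fingerhut holds}), evaluating the Steiner-star objective at the Fingerhut center $o$ and summing the per-edge inequalities immediately gives $t(S)\leq \frac{2}{\sqrt 3}\cost(\mathcal M)$, and there is no circularity, since the paper's proof of Theorem~\ref{theorem: fingerhut holds} nowhere invokes Suri's conjecture. However, this is not the route the paper attributes for this statement. The paper records that Fekete and Meijer proved Conjecture~\ref{conjecture of suri on minimum tree and max matchin} already in 2000, well before the Fingerhut conjecture was settled, via Lemma~\ref{lemma on matching and angle 2pi/3}: a Borsuk--Ulam-type argument produces \emph{some} perfect matching $\mathcal M'$ (not necessarily max-sum) and a point $o$ with $\angle xoy \geq 2\pi/3$ for every $xy \in \mathcal M'$; the law of cosines (the content of Observation~\ref{observation: lens lies in the ellipse}) then gives $\|x-o\|+\|y-o\| \leq \frac{2}{\sqrt 3}\|x-y\|$ edge by edge, so $t(S) \leq \frac{2}{\sqrt 3}\cost(\mathcal M') \leq \frac{2}{\sqrt 3}\cost(\mathcal M)$, where the max-sum hypothesis enters only in the final monotonicity step. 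The two approaches differ in where the difficulty sits: Fekete--Meijer's argument is self-contained and cheap because it is free to pick a convenient matching, whereas your argument buys the conclusion from the much stronger fact that the center can be chosen for the max-sum matching itself --- which is strictly harder, as the paper points out that there exist point sets whose max-sum matchings admit no point satisfying the $2\pi/3$-angle condition of Lemma~\ref{lemma on matching and angle 2pi/3}. Your closing remark on sharpness is also consistent with the paper: the two-coinciding-equilateral-triangles example shows $2/\sqrt 3$ is optimal in Conjecture~\ref{conjecture:fingerhut}, and it transfers verbatim to the Suri bound.
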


In 2000, Fekete and Meijer~\cite{fekete2000minimum} confirmed this conjecture. They derive it from the following observation. (They did not state it as a lemma, however, the next lemma easily follows from their argument.) 
\begin{lemma}[S. P. Fekete and H. Meijer~\cite{fekete2000minimum}]
\label{lemma on matching and angle 2pi/3}
For any even set of points in the plane, there exists a perfect matching \( \mathcal M \) and a point \( o \) in the plane such that either \( o \in \{x,y\} \) or \( \angle xoy \geq 2\pi/3 \) for all \( xy\in \mathcal M\). 
\end{lemma}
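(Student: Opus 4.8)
The plan is to fix a single well-chosen point \( o \) and reduce the statement to a one-dimensional matching problem on the directions seen from \( o \). When \( o \) is not one of the given points, the condition \( \angle xoy \ge 2\pi/3 \) depends only on the \emph{directions} of \( x \) and \( y \) as seen from \( o \). So I would project all \( 2n \) points radially onto the unit circle centered at \( o \), sort them by angle \( \theta_1 \le \dots \le \theta_{2n} \), and match the \( i \)-th direction with the \( (i+n) \)-th. The observation making this work is a \emph{balance} condition: if every open wedge with apex \( o \) and aperture less than \( 2\pi/3 \) contains at most \( n \) of the points, then for each \( i \) the arc carrying the \( n+1 \) directions \( \theta_i,\dots,\theta_{i+n} \) must span at least \( 2\pi/3 \) (otherwise \( n+1 > n \) points would lie in a wedge of aperture \( <2\pi/3 \)), and the same bound applied to the complementary arc makes it span at most \( 4\pi/3 \); hence every matched pair subtends an angle in \( [2\pi/3,4\pi/3] \) at \( o \). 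Thus the whole lemma reduces to producing a center around which the points are balanced in this wedge sense.

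The tempting candidate for \( o \) is the geometric median, where the unit vectors pointing to the \( 2n \) points sum to zero; but this balance of \emph{vectors} does not give the wedge-balance above. Indeed, one can arrange \( 2n \) unit vectors summing to zero with more than \( n \) of them inside an arc shorter than \( 2\pi/3 \), and for such a point set no admissible matching exists with the median as center. So the center must be chosen more carefully. I would instead select \( o \) by a variational principle — for instance as a point maximizing \( \min_{xy\in\mathcal M}\angle xoy \) taken jointly over admissible matchings \( \mathcal M \) — and read off the optimality conditions. Local translations of \( o \) together with local edge swaps should force the optimal configuration to be balanced: if some thin wedge of aperture \( <2\pi/3 \) carried a strict majority of the points, then nudging \( o \) toward that cluster (or pivoting it about a tight edge) would raise the minimal angle, contradicting optimality.

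The clause \( o\in\{x,y\} \) is exactly what absorbs the case in which this extremal \( o \) lands on a given point \( p \). Then the edge of \( \mathcal M \) incident to \( p \) is automatically admissible, so \( p \) and its partner may be set aside and one only needs the remaining \( 2n-2 \) points to be balanced around \( p \). This extra freedom is essential in clustered configurations — say many coincident points at one spot with the rest far away — where the best center is forced to sit at a point of the cluster, and the remaining directions then spread out enough to be matched. It is also worth recording a convexity fact that underlies the whole scheme: for fixed \( x,y \) the admissible locus \( \{o:\angle xoy\ge 2\pi/3\}\cup\{x,y\} \) is the intersection of two disks, hence convex, so for a \emph{fixed} matching the existence of a common \( o \) is governed by Helly's theorem — it suffices that every three of these lenses meet.

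The main obstacle is that the matching and the center cannot be chosen independently: neither may be fixed first. The balance-to-pairing implication and the convexity of the lenses are routine; the delicate part is proving that the extremal center is genuinely balanced (equivalently, ruling out a sub-\(2\pi/3\) wedge holding a strict majority of the points) and switching to the escape clause precisely when the optimizer reaches one of the given points. I expect the heart of the proof to be this case analysis at the optimum, handled in the perturbation/infinite-descent spirit already used elsewhere in the paper.
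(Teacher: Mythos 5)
Your reduction is sound and in fact matches the skeleton of the known argument: if every open wedge of aperture less than \(2\pi/3\) with apex \(o\) contains at most \(n\) of the \(2n\) points, then sorting the directions around \(o\) and matching \(\theta_i\) with \(\theta_{i+n}\) forces every matched pair to subtend a directed angle in \([2\pi/3,4\pi/3]\), hence an undirected angle at least \(2\pi/3\); and your observation that the geometric median need not be wedge-balanced is correct (for \(n\ge 3\) one can place \(n+1\) unit vectors in an arc shorter than \(2\pi/3\), whose sum has norm at least \((n+1)/2\), and cancel it with the remaining \(n-1\)). The genuine gap is the existence of the balanced center, which is the entire content of the lemma and which you leave to an unexecuted variational principle. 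Maximizing \(\min_{xy\in\mathcal M}\angle xoy\) jointly over \(o\) and \(\mathcal M\) does yield an attained optimum (the objective tends to \(0\) as \(o\to\infty\)), but nothing you say shows that the optimal value is at least \(2\pi/3\): the claim that a sub-\(2\pi/3\) wedge holding a strict majority could be repaired by ``nudging \(o\) toward the cluster'' is not an argument --- moving \(o\) changes all \(n\) angles at once, there is no monotonicity, and at the optimum several pairs may be simultaneously tight so that any local move of \(o\) must be coupled with a re-matching. You exhibit neither a descent quantity nor the local improvement step, and you say yourself that this case analysis is deferred; so the proposal is a correct reduction plus a conjecture, not a proof.

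Note also that the paper does not reprove this lemma: it cites Fekete--Meijer (the same argument appears in Dumitrescu--Pach--T\'oth), whose Borsuk--Ulam-type continuity argument is exactly what could replace your missing step. For each \(\theta\), take halving lines of the point set in directions \(\theta\), \(\theta+\pi/3\), \(\theta+2\pi/3\); as \(\theta\) increases by \(\pi/3\) the three lines permute cyclically and the orientation of the triangle they bound flips, so by the intermediate value theorem some \(\theta\) makes them concurrent at a point \(o\). Around such an \(o\) the six \(\pi/3\)-sectors \(s_1,\dots,s_6\) satisfy \(s_i=s_{i+3}\), hence every three consecutive sectors contain exactly \(n\) points; since an open wedge of aperture less than \(2\pi/3\) meets at most three consecutive sectors, this is precisely your wedge-balance condition (indeed one can match opposite sectors directly). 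Degenerate positions, points lying on the lines, and the case where \(o\) is forced onto an input point --- which is where your escape clause \(o\in\{x,y\}\) enters, and where several coincident points may require setting aside more than one pair, not just one as in your sketch --- are handled by perturbation. With that continuity argument substituted for your variational heuristic, your outline closes; without it, it does not.
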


To prove this observation, that is, Lemma~\ref{lemma on matching and angle 2pi/3}, Fekete and Meijer apply a Borsuk--Ulam-type argument. Essentially, the same fact with the same proof was also found by Dumitrescu, Pach, and T\'oth in the paper~\cite{dumitrescu2012drawing}*{Lemma 2} from 2012. It is worth mentioning that there are even sets such that no max-matching satisfies Lemma~\ref{lemma on matching and angle 2pi/3}; see the example in \cite{bereg2019maximum}*{Subsection~1.1}. So Conjecture~\ref{conjecture:fingerhut}, a strengthening of Conjecture~\ref{conjecture of suri on minimum tree and max matchin}, is still open.

Andy Fingerhut shared Conjecture~\ref{conjecture:fingerhut} with David Eppstein, who popularized it on his homepage~\cite{Eppstein1995} and proved it with the factor $5/2$ instead of $2/\sqrt{3}$. Bereg et al.~\cite{bereg2019maximum} conclude that their result on max-sum matchings and Tverberg graphs yields the Fingerhut conjecture with the factor $\sqrt 2$. Our goal is to confirm this conjecture completely.

\begin{theorem}
    \label{theorem: fingerhut holds}
    Conjecture~\ref{conjecture:fingerhut} is true.
\end{theorem}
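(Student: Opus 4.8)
The plan is to follow an optimization approach, reducing first to the case $n=3$ and then minimizing a convex functional over the plane. I would begin with Fingerhut's Helly-type reduction recorded above: each set $E_i=\{o:\|a_i-o\|+\|b_i-o\|\le \tfrac{2}{\sqrt3}\|a_i-b_i\|\}$ is a convex region (the closed ellipse with foci $a_i,b_i$ and eccentricity $\sqrt3/2$), so by Helly's theorem in the plane it suffices to prove that every three of the $E_i$ have a common point. For any three edges of the max-sum matching, their six endpoints again carry a max-sum matching, since otherwise a local re-pairing of these six points would increase $\cost(\mathcal M)$; hence it is enough to treat $n=3$. I would also record the role of the eccentricity: the value $\sqrt3/2$ is exactly the one for which the endpoints of the minor axis of $E_i$ see the segment $a_ib_i$ under the angle $2\pi/3$ of Lemma~\ref{lemma on matching and angle 2pi/3}, and for which the centroid of two coinciding equilateral triangles lies simultaneously on all three boundaries. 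This extremal configuration, singled out in the excerpt, is the equality case that any proof must accommodate.

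For $n=3$ put $g_i(o)=\|a_i-o\|+\|b_i-o\|$ and $\Phi(o)=\max_{i=1,2,3}\bigl(g_i(o)-\tfrac{2}{\sqrt3}\|a_i-b_i\|\bigr)$. Each $g_i$ is convex, so $\Phi$ is convex and coercive and attains its minimum at some point $o^{*}$; the theorem is equivalent to the assertion $\Phi(o^{*})\le0$. Suppose instead that $\Phi(o^{*})=\delta>0$, and translate coordinates so that $o^{*}=0$. With $\hat a_i=a_i/\|a_i\|$ and $\hat b_i=b_i/\|b_i\|$ one has $\nabla g_i(0)=-(\hat a_i+\hat b_i)$, so the first-order optimality condition for the minimum of a maximum of convex functions produces an active index set $A$ together with weights $\lambda_i>0$ summing to $1$ such that
\[ \sum_{i\in A}\lambda_i(\hat a_i+\hat b_i)=0 \qquad\text{and}\qquad g_i(0)=\tfrac{2}{\sqrt3}\|a_i-b_i\|+\delta\quad(i\in A). \]
A single active edge is impossible, because $\hat a_i+\hat b_i=0$ would force $0$ onto the segment $a_ib_i$ and give $g_i(0)=\|a_i-b_i\|<\tfrac{2}{\sqrt3}\|a_i-b_i\|$; thus $|A|\in\{2,3\}$.

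The heart of the argument — and the step I expect to be the main obstacle — is to contradict $\delta>0$ using the max-sum hypothesis. Here the alternating-cycle inequalities enter: for each pair of edges $\|a_i-b_i\|+\|a_j-b_j\|$ dominates both $\|a_i-a_j\|+\|b_i-b_j\|$ and $\|a_i-b_j\|+\|a_j-b_i\|$, and the three edges together obey the analogous $6$-cycle bound. Geometrically, these inequalities prevent the two endpoints of an edge from clustering to one side of $o^{*}$ while constraining their distances to $o^{*}$; the aim is to combine them with the balance condition so as to bound $\sum_{i\in A}\lambda_i g_i(0)$ from above by $\tfrac{2}{\sqrt3}\sum_{i\in A}\lambda_i\|a_i-b_i\|$, which contradicts $g_i(0)-\tfrac{2}{\sqrt3}\|a_i-b_i\|=\delta>0$. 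Concretely, I would treat $|A|=2$ and $|A|=3$ separately, parametrizing each active edge by the half-angle $\tfrac12\angle a_io^{*}b_i$ about the bisector directions fixed by the balance condition; the delicate point is that the swap inequalities must be assembled so that the governing constant comes out to be precisely $2/\sqrt3$, degrading to equality exactly at the equilateral centroid. For configurations where the optimality data underdetermine the geometry — coincident points, degenerate angles, or active edges forced to share a direction — I would invoke an infinite-descent step, modifying the six points so as to preserve both the max-sum property and the strict inequality $\delta>0$ while strictly simplifying the configuration, until it reaches the equilateral extremal case, where $\delta=0$ furnishes the contradiction.
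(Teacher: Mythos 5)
Your scaffolding is sound and in fact coincides with the first half of the paper's argument: the Helly reduction to $n=3$ is legitimate (a max-sum matching restricted to three of its edges is max-sum on those six endpoints), the convex minimization, the first-order condition $\sum_{i\in A}\lambda_i(\hat a_i+\hat b_i)=0$ with $|A|\in\{2,3\}$, and the exclusion of $|A|=1$ all match what the paper does, up to cosmetic differences (the paper minimizes the ratio $h_{ab}(x)=(\lVert a-x\rVert+\lVert b-x\rVert)/\lVert a-b\rVert$ rather than your additive slack, and uses Carath\'eodory on the bisector vectors $\ell_{ab}$ instead of Helly). The genuine gap is everything after that. Your sentence ``the aim is to combine them with the balance condition so as to bound $\sum_{i\in A}\lambda_i g_i(0)$'' is a statement of intent, not an argument, and it is precisely where the difficulty lives. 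The paper makes the swap inequalities usable by building a two-colored graph on the endpoints of the active edges --- blue edges are the active matching edges (origin on the boundary of $\mathcal E_\lambda$), red edges are pairs $xy$ with $\lVert x\rVert+\lVert y\rVert<\lambda\lVert x-y\rVert$ --- and observing that an alternating blue/red cycle produces a re-matching of strictly larger cost, contradicting maximality. The point your plan does not anticipate is that this mechanism can fail: there is a six-vertex graph (Figure~\ref{fig:im4}) consistent with all the first-order data that contains \emph{no} alternating cycle, so no assembly of pairwise or cyclic swap inequalities alone can close the argument, and no parametrization by half-angles around the bisector directions will make the constant $2/\sqrt 3$ ``come out'' by itself. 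The paper needs a separate geometric argument there: after pinning all vertices into explicit cones via the $2\pi/3$-lens bound (Corollary~\ref{corollary:no-red-edge} and Lemma~\ref{main_lemma}), it shows the segments $a_2b_2$ and $a_3b_3$ are disjoint, forces $b_2\in\conv\{a_2,a_3,b_3\}$ using max-sum on four points, and then contradicts $f(a_2,b_2)=\lambda$ via the strict monotonicity of $f(x,y)=(\lVert x\rVert+\lVert y\rVert)/\lVert x-y\rVert$ along segments, proved through the angle-bisector foot $\ell_{xy}$ (Lemma~\ref{F_lemma} and Corollary~\ref{F_cor}).

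Your fallback --- an ``infinite-descent step, modifying the six points so as to preserve both the max-sum property and the strict inequality $\delta>0$ while strictly simplifying the configuration'' --- is not an argument either: you specify no descent quantity, no termination or compactness reason, and no perturbation mechanism; preserving the max-sum property under perturbation is especially problematic, since it is a global condition on the matching, and preserving $\delta>0$ while ``flowing'' to the equilateral case (where $\delta=0$) is exactly the claim to be proved, not a tool. Note that in the paper the only ``descent'' is a single strict cost increase from one swap; nothing is iterated. As written, your proposal reproduces the standard optimization scaffolding but omits the two components that constitute the actual proof: the alternating-cycle combinatorics and the bisector-length argument for the exceptional cycle-free configuration.
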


Our proof of Theorem~\ref{theorem: fingerhut holds} combines an optimization approach and the infinite descent method. A similar idea was used in~\cite{Tverberg1993} and~\cite{Roudneff2001} to prove the Tverberg theorem. Also, an analogous approach was applied in~\cite{Pirahmad2022} to find 
a perfect red-blue Tverberg matching of any $n$ red and $n$ blue points in $\mathbb R^d$.

The paper is organized as follows. In Subsection~\ref{subsection ellipses and lenses}, we restate the Fingerhut conjecture in terms of the intersection of convex regions bounded by ellipses. In Subsection~\ref{subsetction auxiliary facts from convex geometry}, we recall classical facts from convex geometry. Finally, we prove the Fingerhut conjecture in Section~\ref{section proof of conjecture}. (One of the technical observations, Lemma~\ref{main_lemma}, is shown in Subsection~\ref{proof of main lemma}.)

\section{Auxiliary notation and facts}

\subsection{Ellipses and lenses}
\label{subsection ellipses and lenses}

For the sake of brevity, we introduce the following notation for the region bounded by an ellipse
\[
    \mathcal E_\lambda(ab)\coloneqq \big\{x\in \mathbb R^2:\|a-x\|+\|b-x\|\leq \lambda \|a-b\|\big\},
\]
where $a$ and $b$ are some points in the plane. Slightly abusing notation, we say that $\mathcal E_\lambda (ab)$ is an ellipse.  Since the factor $\lambda=2/\sqrt{3}$ plays a main role in Conjecture~\ref{conjecture:fingerhut}, we put 
\[
    \mathcal E(ab)\coloneqq\mathcal E_{2/\sqrt{3}}(ab).
\]
So we can easily restate Conjecture~\ref{conjecture:fingerhut} as follows:
\begin{conjecture}
\label{conjecture fingerhut restate}
    If $\mathcal M$ is a max-sum matching of an even set of points in the plane, then
    \[
        \bigcap_{xy\in \mathcal M} \mathcal E(xy)\ne \emptyset.
    \]
\end{conjecture}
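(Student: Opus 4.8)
The plan is to establish Conjecture~\ref{conjecture fingerhut restate} (equivalently, Theorem~\ref{theorem: fingerhut holds}) by combining a continuous optimization argument with the method of infinite descent, in the spirit of the Tverberg-theorem proofs of~\cite{Tverberg1993} and~\cite{Roudneff2001}. By the Helly theorem (as Fingerhut observed), it suffices to treat the case $n=3$, so throughout I would work with six points carrying a max-sum matching $\{a_1b_1, a_2b_2, a_3b_3\}$ and try to produce a single point $o$ lying in all three ellipses $\mathcal E(a_ib_i)$. The natural object to minimize is the \emph{potential}
\[
    f(o)\coloneqq \max_{i=1,2,3}\;\frac{\|a_i-o\|+\|b_i-o\|}{\|a_i-b_i\|},
\]
so that the conjecture is exactly the assertion $\min_o f(o)\le 2/\sqrt{3}$. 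Since each summand is convex and $f$ is a maximum of convex functions, $f$ is convex and coercive, hence attains its minimum at some point $o^\*$; the goal becomes showing that the minimum value is at most $2/\sqrt 3$.

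First I would analyze the minimizer $o^\*$ via convex-optimization optimality conditions. Let $A\subseteq\{1,2,3\}$ index the ellipses that are \emph{active} at $o^\*$, i.e.\ those achieving the maximum. The subdifferential optimality condition forces $0$ to lie in the convex hull of the gradients $\nabla_o\big(\|a_i-o\|+\|b_i-o\|\big)/\|a_i-b_i\|$ for $i\in A$; each such gradient is a sum of two unit vectors pointing from the foci toward $o^\*$, i.e.\ (up to the normalizing factor) the internal bisector direction at $o^\*$ of the angle $\angle a_i o^\* b_i$. The cases $|A|=1$ and $|A|=2$ are degenerate and should be handled directly: if only one ellipse is active, a descent direction exists unless its gradient vanishes, and so on. The substantive case is $|A|=3$, where optimality gives a nontrivial convex combination of three bisector-type vectors equal to zero, a balance condition relating the three angles $\angle a_i o^\* b_i$ and the positions of the foci around $o^\*$.

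Next comes the descent/infinite-descent engine, which is where I expect the real work to be. Suppose, toward a contradiction, that the minimum value exceeds $2/\sqrt 3$, meaning every active ellipse at $o^\*$ is strictly larger than the critical eccentricity $\sqrt3/2$; equivalently each active angle $\angle a_i o^\* b_i$ is strictly less than $2\pi/3$. I would then look for an \emph{alternating cycle} in the matching — relabel so that the foci interleave around $o^\*$ — and consider the swapped matching obtained by exchanging partners along this cycle (e.g.\ replacing $\{a_ib_i\}$ by $\{a_ib_{i+1}\}$). The max-sum hypothesis says the original matching has cost at least that of the swap, giving an inequality
\[
    \sum_i \|a_i-b_i\|\;\ge\;\sum_i \|a_i-b_{i+1}\|.
\]
Using the law of cosines at $o^\*$ together with the strict angle bounds $\angle a_i o^\* b_i<2\pi/3$, I would aim to show this inequality is in fact violated — i.e.\ the swap strictly increases the cost — contradicting max-sumness. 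This is the step where the magic constant $2/\sqrt3$ must enter quantitatively: the threshold $2\pi/3$ is exactly the angle at which the comparison between $\|a_i-b_i\|$ and the crossing distances $\|a_i-b_{i+1}\|$ flips sign, which is why the extremal configuration of two coincident equilateral triangles (all angles equal to $2\pi/3$) is tight.

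The hard part, and the place I would budget most effort and where I anticipate invoking a separate technical lemma (presumably the paper's Lemma~\ref{main_lemma}), is making the descent rigorous when the optimal $o^\*$ does not already witness the bound: I expect one cannot directly contradict max-sumness at $o^\*$ alone, and instead must perturb the \emph{point configuration} itself, run an infinite-descent argument that produces a strictly better matching or a strictly smaller potential at each step, and argue termination. Managing the interaction between the three angles summing around $o^\*$ (they need not sum to $2\pi$ unless $o^\*$ is interior to all segments), the possibility that $o^\*$ coincides with a focus, and the degenerate active-set cases will require careful case analysis; the planar law-of-cosines estimate tying the angle threshold $2\pi/3$ to the eccentricity $\sqrt3/2$ is the analytic heart on which the whole argument turns.
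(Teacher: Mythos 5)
Your setup coincides with the paper's: minimize the convex function $h_{\mathcal M}(o)=\max_i\bigl(\lVert a_i-o\rVert+\lVert b_i-o\rVert\bigr)/\lVert a_i-b_i\rVert$, use the subdifferential condition for a max of convex functions to place the minimizer in the convex hull of the angle-bisector vectors $\ell_{a_ib_i}$ of the active edges (the paper's Proposition~\ref{corollary: key tool} and Corollary~\ref{corollary on minimum size intersecting ellipses}), and contradict max-sumness by swapping along an alternating cycle. Your identification of the threshold $2\pi/3\leftrightarrow 2/\sqrt3$ is also the right analytic ingredient: it is exactly the lens-in-ellipse fact (Observation~\ref{observation: lens lies in the ellipse}) that makes any pair $x,y$ subtending angle $\geq 2\pi/3$ at the minimizer satisfy the strict ``red'' inequality $\lVert x\rVert+\lVert y\rVert<\lambda\lVert x-y\rVert$ when $\lambda>2/\sqrt3$. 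One small correction: on the matched (``blue'') side, the angle bound $\angle a_io^*b_i<2\pi/3$ by itself gives no usable inequality; the cost comparison needs the \emph{activity equality} $\lVert a_i\rVert+\lVert b_i\rVert=\lambda\lVert a_i-b_i\rVert$ from the minimization, which is why the swap argument strictly improves the cost.

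The genuine gap is your step ``relabel so that the foci interleave around $o^*$'': the existence of an alternating cycle is precisely what cannot be assumed, and it is where essentially all of the paper's work lies. Form the graph on the $4$ or $6$ endpoints of the active edges, blue edges being the active pairs and red edges all pairs satisfying the strict inequality; an alternating blue/red cycle kills maximality of $\mathcal M$, but there is a red-blue graph on six vertices, with every vertex meeting a red edge, that contains \emph{no} alternating cycle (Figure~\ref{fig:im4}). Excluding its geometric realization requires: Lemma~\ref{main_lemma} (a vertex seeing all others within angle $<2\pi/3$ forces a $4$-cycle $a_1b_1a_2b_2$, proved by a careful ray/half-plane argument using $o\in\conv\{\ell_{ab}\}$); the cone positioning showing the two remaining active segments $a_2b_2$ and $a_3b_3$ lie in disjoint cones, hence do not cross, hence (since a max-sum matching of four points in convex position must cross) one endpoint, say $b_2$, lies in $\conv\{a_2,b_3,a_3\}$; and finally the bisector-length monotonicity of $f(x,y)=(\lVert x\rVert+\lVert y\rVert)/\lVert x-y\rVert$ along segments (Lemma~\ref{F_lemma}, Corollary~\ref{F_cor}, applied to the points $p,q$ where the ray $ob_2$ meets $a_2b_3$ and $a_2a_3$), which yields the contradiction $\lambda>\lambda$. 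Your fallback plan --- perturbing the point configuration and running an infinite descent over configurations with a termination argument --- is not developed, and it is not what is needed: the paper performs a single minimization followed by a finite combinatorial-geometric case analysis, with no descent over configurations at all. (Your Helly reduction to $n=3$ is valid, since a submatching of a max-sum matching is max-sum on its support, but it does not help with this obstruction; the paper instead keeps all of $\mathcal M$ and applies Carath\'eodory to the active set, and in either route the two-active-edge case and the bad six-vertex graph must still be handled.)
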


Also, we need the notation of lens. For two distinct points $x,y$ in the plane and $\alpha\in (0,\pi)$, denote by $\alpha(xy)$ the so-called \textit{$\alpha$-lens} for the line segment $xy$, that is,
\[
    \alpha(xy)=\{z\in \mathbb R^2\setminus\{x,y\}: \angle xzy\geq \alpha\} \cup \{x,y\}.
\]
Next, we state a trivial observation, which we leave without proof.
\begin{observation}
\label{observation: lens lies in the ellipse}
For any two distinct points $x$ and $y$ in the plane, we have
\[
    \alpha(xy) \subset \mathcal{E}(xy),
\]
where $\alpha=2\pi/3$.
\end{observation}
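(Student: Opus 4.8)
The plan is to verify the defining inequality of $\mathcal E(xy)$ directly for every point of the lens. First, the two foci lie in the ellipse: taking $z=x$ gives $\|x-z\|+\|y-z\|=\|x-y\|\le \frac{2}{\sqrt 3}\|x-y\|$ because $2/\sqrt 3>1$, and the same holds for $z=y$. So I would reduce the claim to an arbitrary point $z\ne x,y$ with $\angle xzy\ge 2\pi/3$ and show that $z\in\mathcal E(xy)$.

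Next I would introduce the triangle quantities $p\coloneqq\|x-z\|$, $q\coloneqq\|y-z\|$, $c\coloneqq\|x-y\|$, and $\theta\coloneqq\angle xzy$, and invoke the law of cosines in $\triangle xzy$, namely $c^2=p^2+q^2-2pq\cos\theta$. Since $\theta\ge 2\pi/3$ and cosine is decreasing on $[0,\pi]$, we have $\cos\theta\le\cos(2\pi/3)=-\tfrac12$, hence $-2pq\cos\theta\ge pq$ and therefore
\[
    c^2 \;\ge\; p^2+q^2+pq.
\]

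The membership $z\in\mathcal E(xy)$ is precisely $p+q\le\frac{2}{\sqrt3}\,c$, equivalently (after squaring, both sides being nonnegative) $(p+q)^2\le\frac43 c^2$. Combining this with the bound of the previous step, it suffices to establish $(p+q)^2\le\frac43\bigl(p^2+q^2+pq\bigr)$, and this rearranges to
\[
    \tfrac43\bigl(p^2+q^2+pq\bigr)-(p+q)^2 \;=\; \tfrac13\,(p-q)^2 \;\ge\; 0,
\]
which holds for all real $p,q$. Chaining the two inequalities gives $p+q\le\frac{2}{\sqrt3}c$, as required.

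The only real content is the monotonicity reduction that turns the angle hypothesis $\angle xzy\ge 2\pi/3$ into the algebraic bound $c^2\ge p^2+q^2+pq$; once that is in place the remaining step is a perfect square and presents no obstacle. I would also note that equality forces $p=q$, which pins down the two tangency points of the $2\pi/3$-lens and the ellipse on the perpendicular bisector of $xy$ and explains why the factor $2/\sqrt3$ is sharp here.
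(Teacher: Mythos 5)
Your proof is correct and complete. Note that the paper explicitly leaves this observation without proof (it is introduced as ``a trivial observation, which we leave without proof''), so there is no argument in the paper to compare against; your law-of-cosines verification --- handling the foci separately, converting $\angle xzy \ge 2\pi/3$ into $c^2 \ge p^2+q^2+pq$, and finishing with the identity $\tfrac43(p^2+q^2+pq)-(p+q)^2=\tfrac13(p-q)^2$ --- is precisely the routine computation the authors had in mind, and it correctly establishes the sharpness of the factor $2/\sqrt 3$ at the lens apexes on the perpendicular bisector. One small refinement to your closing remark: equality in the chained inequality forces not only $p=q$ but also $\theta=2\pi/3$ exactly, since the first inequality $c^2\ge p^2+q^2+pq$ is strict whenever $\theta>2\pi/3$ and $pq>0$.
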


We refer to~\cite{Pirahmad2022}*{Subsection~9.2} to read about a high-dimensional problem on the intersection of lenses (generalizing Lemma~\ref{lemma on matching and angle 2pi/3} and similar to Conjecture~\ref{conjecture fingerhut restate}).

\subsection{Auxiliary facts from convex geometry}
\label{subsetction auxiliary facts from convex geometry}
We recall a standard notation and theorems from convex geometry; see \cite{HiriartUrruty1993}*{Chapter~6}.

A function $f:\mathbb R^n \rightarrow \mathbb R$ is called \textit{convex} if it satisfies the inequality
\[
    f(\lambda x+(1-\lambda)y)\leq \lambda f(x)+(1-\lambda)f(y)
\]
for any two points $x, y\in \mathbb R^d$ and $0\leq \lambda\leq 1$.
The \textit{subdifferential} $\partial f(x)$ of a convex function~$f$ at a point $x \in \mathbb{R}^n$ is given by
\[
    \partial f(x) \coloneqq \{s \in \mathbb{R}^n : f(y) - f(x) \geq \langle s, y - x \rangle \text{\ \ for all } y \in \mathbb{R}^n \},
\]
where $\langle a, b \rangle$ stands for the dot product of $a,b\in \mathbb R^d$.
According to this definition, the set $\partial f(x)$ is convex. A vector $s \in \partial f(x)$ is called a \textit{subgradient} of $f$ at~$x$.

Next, we state two simple properties of convex functions.
\begin{theorem}{\cite{HiriartUrruty1993}*{Corollary~2.1.4}}\label{theorem: subgradient of differentiable function}
If a convex function $f: \mathbb{R}^n \rightarrow \mathbb{R}$ is differentiable at a point $x \in \mathbb{R}^n$, then its only subgradient at $x$ is its gradient, that is, 
\[
    \partial f(x) = \{\nabla f(x)\}.
\]
\end{theorem}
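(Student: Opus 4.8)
The plan is to establish the two inclusions $\nabla f(x)\in\partial f(x)$ and $\partial f(x)\subseteq\{\nabla f(x)\}$ separately, in each case reducing the claim to a one-sided limit of difference quotients along a ray emanating from $x$. The whole argument rests on the fact that differentiability of $f$ at $x$ identifies every one-sided directional derivative with the linear functional $v\mapsto\langle\nabla f(x),v\rangle$.

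First I would verify that the gradient is indeed a subgradient. Fix an arbitrary $y\in\mathbb R^n$ and, for $\lambda\in(0,1]$, consider the point $x+\lambda(y-x)=\lambda y+(1-\lambda)x$. Convexity yields
\[
    f\big(x+\lambda(y-x)\big)\le \lambda f(y)+(1-\lambda)f(x),
\]
so that $\tfrac{1}{\lambda}\big(f(x+\lambda(y-x))-f(x)\big)\le f(y)-f(x)$. Letting $\lambda\to 0^+$, the left-hand side converges to the directional derivative of $f$ at $x$ in the direction $y-x$, which by differentiability equals $\langle\nabla f(x),\,y-x\rangle$. Hence $f(y)-f(x)\ge\langle\nabla f(x),\,y-x\rangle$ for every $y\in\mathbb R^n$, which is exactly the defining inequality for a subgradient, giving $\nabla f(x)\in\partial f(x)$.

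Second I would show that there is no other subgradient. Let $s\in\partial f(x)$ be arbitrary. For any $v\in\mathbb R^n$ and $t>0$, substituting $y=x+tv$ into the subgradient inequality gives $f(x+tv)-f(x)\ge t\,\langle s,v\rangle$; dividing by $t$ and letting $t\to 0^+$ produces $\langle\nabla f(x),v\rangle\ge\langle s,v\rangle$, once again using differentiability to identify the one-sided derivative with $\langle\nabla f(x),v\rangle$. Replacing $v$ by $-v$ yields the opposite inequality, so $\langle\nabla f(x)-s,\,v\rangle=0$ for all $v\in\mathbb R^n$, whence $s=\nabla f(x)$.

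The only genuine subtlety, and the step I would take most care over, is the passage to the limit in the difference quotient: one must justify that $\tfrac{1}{t}\big(f(x+tv)-f(x)\big)$ converges as $t\to 0^+$ and that its limit coincides with $\langle\nabla f(x),v\rangle$. For a convex function this difference quotient is nondecreasing in $t>0$, so the limit exists automatically, and differentiability at $x$ forces its value to be $\langle\nabla f(x),v\rangle$. Every remaining step is a direct manipulation of the defining inequalities, so I expect no further obstacle.
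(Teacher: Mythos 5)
Your proof is correct. There is nothing in the paper to compare it against: the paper imports this statement verbatim, with a citation to Hiriart-Urruty and Lemar\'echal (Corollary~2.1.4), gives no proof of its own, and uses it only as an ingredient in Proposition~\ref{corollary: key tool}. Your two-inclusion argument is the standard textbook route, and both limit passages are sound: in the first half, the inequality $\frac{1}{\lambda}\bigl(f(x+\lambda(y-x))-f(x)\bigr)\le f(y)-f(x)$ survives the limit $\lambda\to 0^{+}$ precisely because differentiability identifies that limit with $\langle\nabla f(x),\,y-x\rangle$, and weak inequalities are preserved under limits. One small simplification worth noting: in the uniqueness half, neither convexity nor the monotonicity of difference quotients is actually needed --- the subgradient inequality combined with the first-order expansion $f(x+tv)=f(x)+t\langle\nabla f(x),v\rangle+o(t)$ already yields $\langle s,v\rangle\le\langle\nabla f(x),v\rangle$ for every $v$, and replacing $v$ by $-v$ gives $s=\nabla f(x)$; convexity is genuinely used only in the first half, to show $\nabla f(x)\in\partial f(x)$.
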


\begin{theorem}{\cite{HiriartUrruty1993}*{Theorem~2.2.1}}\label{theorem: o lies in partial f(x)}
Let $f: \mathbb{R}^n \rightarrow \mathbb{R}$ be a convex function. Then, $f$ is minimized at $x$ over $\mathbb{R}^n$ if and only if the origin $o$ lies in $\partial f(x)$.
\end{theorem}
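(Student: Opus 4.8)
The plan is to verify both directions directly from the definition of the subdifferential, using the single subgradient candidate $s = o$. The key observation is that, for $s = o$, the defining inequality of a subgradient,
\[
    f(y) - f(x) \geq \langle o, y - x \rangle,
\]
collapses to $f(y) \geq f(x)$, because $\langle o, y - x \rangle = 0$ for every $y$. Thus the single statement ``$o$ belongs to $\partial f(x)$'' unwinds to ``$f(y) \geq f(x)$ for all $y \in \mathbb{R}^n$'', which is precisely the assertion that $x$ is a global minimizer of $f$.

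Concretely, for the forward implication I would assume that $f$ attains its minimum at $x$, so that $f(y) \geq f(x)$ for all $y$; then $f(y) - f(x) \geq 0 = \langle o, y - x \rangle$, so $o$ satisfies the defining condition of a subgradient, giving $o \in \partial f(x)$. For the reverse implication I would assume $o \in \partial f(x)$, read off $f(y) - f(x) \geq \langle o, y - x \rangle = 0$ for all $y$, and conclude $f(y) \geq f(x)$, i.e.\ that $x$ is a global minimizer.

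There is no genuine obstacle here: the whole argument is the remark that the zero-subgradient inequality coincides with the global-minimum inequality. It is perhaps worth noting that this particular equivalence does not even use convexity of $f$---it is a formal consequence of the definition of $\partial f$---although convexity is exactly what guarantees that $\partial f(x)$ is nonempty and hence makes this criterion a usable optimality certificate in the arguments that follow.
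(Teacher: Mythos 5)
Your proof is correct and is exactly the standard argument: the paper itself gives no proof (it cites \cite{HiriartUrruty1993}*{Theorem~2.2.1}), and unwinding the definition of $\partial f(x)$ at $s=o$ is precisely how that reference establishes the equivalence. Your side remark that convexity is not needed for the equivalence itself, only for guaranteeing $\partial f(x)\neq\emptyset$, is also accurate.
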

We need the following notation to state the next theorem and its corollary.
For convex functions $f_1,\dots,f_m:\mathbb R^n \to \mathbb R$, consider the function $f: \mathbb R^n \to \mathbb R$ defined by $f(x) =  \max \{f_1(x),\dots,f_m(x)\}.$
For any point $x\in \mathbb R$, we write $I(x) =
\{1\leq i\leq m:f_i(x)=f(x)\}.$
\begin{theorem}{\cite{HiriartUrruty1993}*{Corollary~4.3.2}}\label{theorem: subgradient of maximum of convex functions}
For any $x \in \mathbb{R}^n $ we have
$$\partial f(x) = \conv (\cup_{i \in I(x)} \partial f_i (x)).$$
\end{theorem}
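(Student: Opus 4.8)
The plan is to prove the claimed identity by establishing the two inclusions separately, using the definition of the subgradient for the easy containment and support functions for the harder one. Throughout I write $g'(x;d) \coloneqq \lim_{t\downarrow 0}\big(g(x+td)-g(x)\big)/t$ for the one-sided directional derivative of a convex function $g$ at $x$ in direction $d$; for convex $g$ the difference quotient is nondecreasing in $t$, so this limit exists.

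First I would verify the inclusion $\conv(\cup_{i\in I(x)}\partial f_i(x))\subseteq \partial f(x)$. Since $\partial f(x)$ is convex, it suffices to check that $\partial f_i(x)\subseteq \partial f(x)$ for each $i\in I(x)$. Indeed, if $s\in\partial f_i(x)$, then for every $y\in\mathbb R^n$ we have
\[ f(y)-f(x)\geq f_i(y)-f_i(x)\geq \langle s,y-x\rangle, \]
where the first inequality uses $f(y)\ge f_i(y)$ together with $f_i(x)=f(x)$ (valid because $i\in I(x)$); hence $s\in\partial f(x)$.

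The reverse inclusion is the crux, and here I would pass to support functions. I would invoke the standard facts that, for a finite-valued convex function $g$, the set $\partial g(x)$ is nonempty, compact and convex, and its support function is exactly the directional derivative, $\max\{\langle s,d\rangle:s\in\partial g(x)\}=g'(x;d)$. The key step is then the directional-derivative identity $f'(x;d)=\max_{i\in I(x)}f_i'(x;d)$. To prove it, observe that for an inactive index $j\notin I(x)$ we have $f_j(x)<f(x)$, so by continuity $f_j(x+td)<\max_{i\in I(x)}f_i(x+td)$ for all sufficiently small $t>0$; therefore $f(x+td)=\max_{i\in I(x)}f_i(x+td)$ near $t=0$, and dividing by $t$ and letting $t\downarrow 0$ yields the identity, since the maximum of finitely many convergent quotients converges to the maximum of the limits.

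Finally I would set $C\coloneqq \conv(\cup_{i\in I(x)}\partial f_i(x))$, which is compact and convex, being the convex hull of a finite union of compact sets in $\mathbb R^n$ (compact by Carath\'eodory's theorem). Since the support function of the convex hull of a union equals the pointwise maximum of the support functions,
\[ \max_{s\in C}\langle s,d\rangle=\max_{i\in I(x)}f_i'(x;d)=f'(x;d)=\max_{s\in\partial f(x)}\langle s,d\rangle \]
for every direction $d$. As two nonempty closed convex sets with identical support functions must coincide, we conclude $\partial f(x)=C$, as required. The main obstacle is not the combinatorics of the active set $I(x)$, which the elementary limit argument above handles, but rather the background machinery it rests upon: the support-function characterization $\max\{\langle s,d\rangle:s\in\partial g(x)\}=g'(x;d)$ and the nonemptiness and compactness of subdifferentials of finite convex functions. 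These in turn follow from the sublinearity of the map $d\mapsto g'(x;d)$ combined with the Hahn--Banach separation theorem; granting them, the present statement reduces to the directional-derivative identity.
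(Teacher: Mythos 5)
Your proof is correct: the easy inclusion via the definition of the subgradient, the directional-derivative identity $f'(x;d)=\max_{i\in I(x)}f_i'(x;d)$ obtained by discarding inactive indices near $x$, and the identification of both sides through their support functions together constitute a complete argument. The paper itself offers no proof of this statement --- it is quoted verbatim from Hiriart-Urruty and Lemar\'echal (Corollary~4.3.2 of the cited book) --- and your argument is essentially the standard one given there, so there is nothing to reconcile beyond noting that the background facts you invoke (nonemptiness and compactness of $\partial g(x)$ for finite convex $g$, and the support-function formula for $g'(x;\cdot)$) are exactly the machinery the cited source develops beforehand.
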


The key optimization tool we need is the following corollary of Theorems~\ref{theorem: subgradient of differentiable function}, \ref{theorem: o lies in partial f(x)}, and \ref{theorem: subgradient of maximum of convex functions}.

\begin{proposition}
\label{corollary: key tool}
If the function $f$ attains its global minimum at the point $x$ and, for all $i\in I(x)$, the function $f_i$ is differentiable at $x$, then we have 
\[
    o \in \conv \{\nabla f_i (x) : i \in I(x)\}.
\]
\end{proposition}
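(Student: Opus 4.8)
The plan is to chain together the three preceding theorems in the obvious order, since this proposition is precisely their combination. First I would invoke the optimality criterion: because $f$ attains its global minimum at $x$, Theorem~\ref{theorem: o lies in partial f(x)} gives that $o \in \partial f(x)$. This is the only place where the hypothesis that $x$ is a global minimizer enters.

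Next I would describe $\partial f(x)$ explicitly. By Theorem~\ref{theorem: subgradient of maximum of convex functions}, applied to $f = \max\{f_1,\dots,f_m\}$, we have
\[
    \partial f(x) = \conv\Big(\bigcup_{i \in I(x)} \partial f_i(x)\Big).
\]
Here I would use the differentiability hypothesis: for every $i \in I(x)$ the function $f_i$ is differentiable at $x$, so Theorem~\ref{theorem: subgradient of differentiable function} collapses each set $\partial f_i(x)$ to the singleton $\{\nabla f_i(x)\}$. Substituting these singletons into the union yields
\[
    \partial f(x) = \conv\{\nabla f_i(x) : i \in I(x)\}.
\]

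Combining the two displays gives $o \in \conv\{\nabla f_i(x) : i \in I(x)\}$, which is the desired conclusion. There is no real obstacle here: the statement is a formal corollary, and the only care needed is to confirm that the differentiability assumption is imposed exactly on the index set $I(x)$ appearing in Theorem~\ref{theorem: subgradient of maximum of convex functions}, so that every subdifferential occurring in the union is indeed a single gradient. I would remark that $f$ is automatically convex as the pointwise maximum of convex functions, so the cited theorems apply to it without further checking.
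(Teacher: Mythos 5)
Your proposal is correct and matches the paper's own proof essentially step for step: both apply Theorem~\ref{theorem: o lies in partial f(x)} to get $o \in \partial f(x)$, Theorem~\ref{theorem: subgradient of maximum of convex functions} to express $\partial f(x)$ as the convex hull of the $\partial f_i(x)$ over $I(x)$, and Theorem~\ref{theorem: subgradient of differentiable function} to collapse each $\partial f_i(x)$ to $\{\nabla f_i(x)\}$. Your closing remark that $f$ is convex as a pointwise maximum of convex functions is a sensible extra check, which the paper leaves implicit.
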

\begin{proof}
By Theorem \ref{theorem: subgradient of maximum of convex functions} we have
$$\partial f(x) = \conv (\cup_{i \in I(x)} \partial f_i (x)).$$
Since $f$ attains global minimum at $x$, by Theorem \ref{theorem: o lies in partial f(x)} we get $o \in \partial f(x)$. Applying Theorem~\ref{theorem: subgradient of differentiable function} to the function $f_i$, for all $i\in I(x)$, we obtain
\(
    \partial f_i(x) = \{\nabla f_i(x)\}.
\)
Hence,
\[
o \in \partial f(x) = \conv (\cup_{i \in I(x)} \partial f_i (x)) = \conv \{\nabla f_i (x) : i \in I(x)\},
\]
which finishes the proof.
\end{proof}

In our problem, we apply Proposition~\ref{corollary: key tool} to a particular collection of functions and obtain its corollary in the form needed to prove the Fingerhut conjecture. 

First, we introduce an auxiliary notation. For two vectors $x$ and $y$, put
\[ 
    \ell_{xy}\coloneqq\frac{\lVert y \rVert}{\|x\|+\|y\|}x +\frac{\lVert x\rVert}{\lVert x\rVert+\lVert y\rVert} y.
\]
Clearly, the vector $\ell_{x y}$ bisects the angle between $x$ and $y$, and moreover, the point~$\ell_{xy}$ belongs to the line segment $xy$.

For any distinct points $a$ and $b$, the function $h_{ab}:\mathbb R^2 \to \mathbb R $ is defined by
\[
    h_{a b} (x)\coloneqq \dfrac{\lVert a - x \rVert + \lVert b - x \rVert}{\lVert a - b \rVert}.
\]
Clearly, it is well-defined and differentiable everywhere except at $a$ and $b$. For a set $\mathcal M$ of pairs of distinct points in the plane, the function $h_{\mathcal M}:\mathbb R^2 \to \mathbb R$ is given by $h_{\mathcal M} \coloneqq \max_{ab\in \mathcal M} h_{ab}$.

\begin{corollary} 
\label{corollary on minimum size intersecting ellipses}
Let $\mathcal M$ be a set of pairs of distinct points. If the function $h_{\mathcal M}$ attains its global minimum at the origin $o$ and $h_{\mathcal M}(o) > 1$, then
\[
    o \in \conv \{\ell_{a b} : ab \in \mathcal{M}_1\},
\]
where 
\[
    \mathcal M_1 \coloneqq \left\{ab \in \mathcal{M} : \dfrac{\lVert a \rVert + \lVert b \rVert}{\lVert a - b \rVert} = h_{\mathcal M}(o) \right\}.
\]
\end{corollary}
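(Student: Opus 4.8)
The plan is to apply Proposition~\ref{corollary: key tool} to the family of convex functions $\{h_{ab} : ab \in \mathcal M\}$, whose pointwise maximum is $f = h_{\mathcal M}$. Evaluating at the origin gives $h_{ab}(o) = (\|a\|+\|b\|)/\|a-b\|$, so the active index set $I(o)$ is exactly $\mathcal M_1$. First I would verify the differentiability hypothesis of the proposition: each $ab \in \mathcal M_1$ satisfies $h_{ab}(o) = h_{\mathcal M}(o) > 1$, and since $\|a\|+\|b\| = \|a-b\|$ precisely when $o$ lies on the segment $ab$, the strict inequality forces $a \neq o$ and $b \neq o$. As $h_{ab}$ is differentiable away from $a$ and $b$, it is differentiable at $o$. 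Hence Proposition~\ref{corollary: key tool} applies and yields $o \in \conv\{\nabla h_{ab}(o) : ab \in \mathcal M_1\}$.

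The next step is to compute the gradient and relate it to $\ell_{ab}$. From $\nabla_x \|a-x\| = (x-a)/\|a-x\|$ one gets
\[
    \nabla h_{ab}(o) = -\frac{1}{\|a-b\|}\left(\frac{a}{\|a\|} + \frac{b}{\|b\|}\right).
\]
Using $\dfrac{a}{\|a\|} + \dfrac{b}{\|b\|} = \dfrac{\|b\|a + \|a\|b}{\|a\|\,\|b\|} = \dfrac{\|a\|+\|b\|}{\|a\|\,\|b\|}\,\ell_{ab}$ together with $(\|a\|+\|b\|)/\|a-b\| = h_{\mathcal M}(o)$ on $\mathcal M_1$, this simplifies to
\[
    \nabla h_{ab}(o) = -\frac{h_{\mathcal M}(o)}{\|a\|\,\|b\|}\,\ell_{ab},
\]
so each gradient is a strictly negative scalar multiple of the corresponding point $\ell_{ab}$.

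Finally, I would convert the conclusion about gradients into the desired statement about the $\ell_{ab}$. By definition of the convex hull there are coefficients $\mu_{ab} \geq 0$ with $\sum_{ab \in \mathcal M_1}\mu_{ab} = 1$ and $\sum_{ab}\mu_{ab}\nabla h_{ab}(o) = o$. Substituting the formula above, and setting $\nu_{ab} = \mu_{ab}\,h_{\mathcal M}(o)/(\|a\|\,\|b\|) \geq 0$, the identity becomes $\sum_{ab}\nu_{ab}\,\ell_{ab} = o$. Since some $\mu_{ab} > 0$ and every factor $h_{\mathcal M}(o)/(\|a\|\,\|b\|)$ is strictly positive, the total weight $N = \sum_{ab}\nu_{ab}$ is positive; dividing through by $N$ presents $o$ as a genuine convex combination of the $\ell_{ab}$, $ab \in \mathcal M_1$, which is the claim.

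The computation itself is routine, and the only genuine subtlety is that Proposition~\ref{corollary: key tool} furnishes a convex combination of the \emph{gradients} rather than of the $\ell_{ab}$. This is resolved by the observation that each gradient is a positive multiple of $-\ell_{ab}$, so the zero vector is simultaneously a nonnegative combination of the $\ell_{ab}$; renormalizing the automatically positive total weight yields the required convex combination. The hypothesis $h_{\mathcal M}(o) > 1$ plays a double role here: it secures differentiability at $o$ so that the proposition is applicable, and it keeps the scalars $h_{\mathcal M}(o)/(\|a\|\,\|b\|)$ bounded away from zero.
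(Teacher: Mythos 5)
Your proof is correct and takes essentially the same route as the paper: you verify differentiability at $o$ from the hypothesis $h_{\mathcal M}(o) > 1$, apply Proposition~\ref{corollary: key tool} to get $o \in \conv\{\nabla h_{ab}(o) : ab \in \mathcal M_1\}$, and compute that each gradient is a negative scalar multiple of $\ell_{ab}$. The only difference is cosmetic: you make explicit the renormalization converting the convex combination of gradients into one of the points $\ell_{ab}$, a step the paper leaves implicit after observing the opposite collinearity.
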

\begin{proof}
For all $ab\in \mathcal M_1$, we have 
\[
    h_{a b}(o)=h_{\mathcal M}(o)>1=h_{ab}(a)=h_{ab}(b),
\] 
and thus, the points $a$ and $b$ are distinct from $o$. So, for all $ab\in \mathcal M_1$, the function $h_{ab}$ is differentiable at~$o$. 

Applying Proposition~\ref{corollary: key tool} to the function $h(x)=\max_{ab\in \mathcal M} h_{ab}(x)$, we obtain
\[
    o \in \conv \{\nabla h_{a b} (o) : a b \in \mathcal{M}_1 \}.
\]

To finish the proof, we show that the vectors $\nabla h_{ab}(o)$ and $\ell_{ab}$ are collinear in the \textit{opposite} directions. By a routine computation, one easily get that 
\[
\nabla h_{a b} (o) = - \dfrac{1}{\lVert a-b \rVert}\left(\dfrac{a}{\lVert a\rVert}+\dfrac{b}{\lVert b\rVert}\right) =  - \dfrac{\lVert a\rVert + \lVert b\rVert}{\lVert a-b \rVert \lVert a\rVert \lVert b \rVert} \ell_{a b},
\]
which finishes the proof.
\end{proof}

\section{Intersecting ellipses}
\label{section proof of conjecture}
\subsection{Angular notation}
Throughout this section we use the following angular notation. We define the following functions 
\[
    \measuredangle (\cdot, \cdot): \mathbb R^2\setminus\{o\}\to [0, 2\pi) \text{\ \ and\ \ } \angle(\cdot, \cdot):\mathbb R^2\setminus\{o\}\to [0,\pi].
\]
Essentially, $\measuredangle (x, y)$ is the directed angle between vectors $x$ and $y$ measured in counterclockwise direction and $\angle (x, y)$ is the smallest (non-negative) angle between $x$ and $y$. Clearly, these functions are well-defined. Slightly abusing notation, we will also consider the angles $\measuredangle (m,n)$ and $\angle (m,n)$ for rays $m$ and $n$ emanating from the origin $o$, that is, $\measuredangle(m,n)=\measuredangle(x,y)$ and $\angle (m,n)=\angle (x,y)$, where $x\in m$ and $y\in n$ are any two points distinct from $o$.

\subsection{Proof of Theorem~\ref{theorem: fingerhut holds}}
\selectlanguage{english}
Suppose to the contrary that for an even set of points $S\subset \mathbb R^2$ and a max-sum matching $\mathcal{M}$ of $S$, we have
\begin{equation}\label{empty_intersection}
\bigcap\limits_{a b \in \mathcal{M}} \mathcal E(a b) = \varnothing.
\end{equation} 

First, we show that there is no edge with coinciding endpoints, that is, $a\ne b$ for any edge $ab\in \mathcal M$.
Indeed, suppose that $a b \in \mathcal{M}$ and $a=b$. Since the intersection of the ellipses is empty, there is an edge $cd \in \mathcal{M}$ such that the point $a$ lies outside of the line segment $c d$. Hence by the triangle inequality, we have
$$\lVert a - b \rVert + \lVert c - d \rVert = \lVert c - d \rVert  < \lVert a - c \rVert + \lVert a - d \rVert=\lVert a - c \rVert + \lVert b - d \rVert.$$
Thus the cost of the matching $\mathcal M \setminus \{ab,cd\} \cup \{ac,bd\} $ is larger than \(\cost (\mathcal M)\), a contradiction with the maximality of \(\mathcal M \). Therefore, no edge in \( \mathcal M\) has length 0.

Recall that the function $h_{\mathcal M}: \mathbb{R}^2 \xrightarrow{}  \mathbb{R}$ is given by
\[
    h_{\mathcal M}(x)\coloneqq \max\limits_{a b \in \mathcal{M}} \dfrac{\lVert a - x \rVert + \lVert b - x \rVert}{\lVert a - b \rVert},
\]
and so, it is well defined as $a\ne b$ for any $ab\in \mathcal M$. Since $h_{\mathcal M}$ is bounded from below, it attains its minimum at some point $x_{\mathcal{M}}$. Without loss of generality we may assume that $x_{\mathcal{M}}$ coincides with the origin $o$. For the sake of brevity, put 
\[
    \lambda \coloneqq h_{\mathcal M}(o).
\]
By \eqref{empty_intersection} we have $\lambda > 2/\sqrt{3} > 1$. 

Consider the submatching $\mathcal{M}_1 \subseteq \mathcal{M}$ consisting of pairs $ab$ such that 
\[
    \lVert a \rVert + \lVert b \rVert = \lambda \lVert a - b \rVert.
\]
Applying Corollary \ref{corollary on minimum size intersecting ellipses} to the matching $\mathcal M$, we obtain
\[
    o \in \conv \left\{\ell_{ab} : ab \in \mathcal M_1\right\}.
\]
Recall that the vector $\ell_{a b}$ bisects the angle between vectors $a$ and $b$, and the endpoint of  $\ell_{a b}$ lies on the line segment $ab$. 
Let $\mathcal{M}_2$ be a minimal subset of $\mathcal{M}_1$ such that 
\begin{equation}
    \label{equation o lies in the convex hull of l_ab}
    o\in \conv \big\{\ell_{ab}: ab\in \mathcal M_2\big\}.
\end{equation}
By the Carathéodory theorem, the size of $\mathcal M_2$ is 2 or 3. Indeed, $\mathcal M_2$ cannot consist of only one edge $ab$. Otherwise, the origin $o$ lies on the line segment $ab$. Since $o$ also lies on the boundary of the ellipse $\mathcal E_\lambda(ab)$, we conclude that $\lambda=1$, a contradiction.

Denote by $S_0$ the set of all endpoints of line segments of $\mathcal M_2$, which is of size $4$ or $6$. Consider the graph $G$ on the vertex set $V(G)\coloneqq S_0$ with the edge set $E(G)$ partitioned into blue and red subsets
\begin{equation}\label{edge_sets}
E_b(G) \coloneqq \mathcal M_2 \text{\ \ and\ \ } E_r(G) \coloneqq \{xy\in V(G)\times V(G): \lVert x \rVert + \lVert y \rVert < \lambda \lVert x - y \rVert\}.
\end{equation}
By the definition, if $x$ and $y$ are connected by blue edge, then their ellipse $\mathcal E_\lambda(xy)$ contains the origin on its boundary. Therefore, all vertices of $G$ are distinct from the origin. Also, if $xy$ is a red edge, then the origin lies in the interior of $\mathcal E_\lambda (xy)$.

To complete the proof, we show that the graph $G$ contains an \textit{alternating cycle}, that is, a simple cycle of even length whose edges are taken alternately from $E_b(G)$ and $E_r(G)$. Indeed, assume that there is a cycle $x_1 y_1 \dots x_m y_m x_{m+1}$, where $x_{m+1}\coloneqq x_1$, such that $x_i y_i \in E_b(G)$ and $y_ix_{i+1}\in E_r(G)$. Hence we have
\[
    \sum \limits_{i=1}^{m} \lVert y_i - x_{i+1} \rVert > \frac{1}{\lambda} \sum \limits_{i=1}^{m} (\lVert y_i \rVert + \lVert x_{i+1} \rVert) = \frac{1}{\lambda} \sum \limits_{i=1}^{m} (\lVert y_i \rVert + \lVert x_i \rVert) = \sum \limits_{i=1}^{m} \lVert x_i - y_i \rVert.
\]
Replacing in \(\mathcal M\) the blue edges of the cycle by the red ones, we find the perfect matching
\[
\mathcal M\setminus \{x_1y_1,\dots,x_my_m\}\cup \{ y_1x_2,\dots, y_mx_{m+1}\}
\]
with larger cost.
This contradicts the maximality of $\mathcal{M}$.

\begin{remark}
    The authors of~\cite{Pirahmad2022} in their proof of Theorem 3 on the existence of Tverberg matching for even sets in \( \mathbb R^d \) also consider alternating cycles and apply the optimization approach.
\end{remark}

Our proof heavily relies on Observation \ref{observation: lens lies in the ellipse}. 
Namely, we need its corollary.

\begin{corollary}\label{corollary:no-red-edge}
If there is no red edge between vertices $x$ and $y$ of $G$ (that is, they are disjoint or connected by a blue edge), then $\angle (x, y) < 2\pi/3$.
\end{corollary}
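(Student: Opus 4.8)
The plan is to reduce the claim entirely to Observation~\ref{observation: lens lies in the ellipse}, by reading off both the hypothesis and the desired conclusion as statements about whether the origin lies in the ellipse $\mathcal E(xy)$ and in the lens $\tfrac{2\pi}{3}(xy)$, respectively. No new estimate is needed; the whole content is the inclusion of the lens in the ellipse recorded in that observation, combined with the slack gained from $\lambda > 2/\sqrt3$.

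First I would translate the hypothesis into an inequality. By the definition of $E_r(G)$ in~\eqref{edge_sets}, the absence of a red edge between $x$ and $y$ means precisely $\|x\|+\|y\| \ge \lambda\|x-y\|$: for disjoint vertices this is the negation of the red-edge inequality, and for a blue edge it holds with equality. Since every vertex of $G$ is distinct from the origin we have $\|x-y\| > 0$, so from $\lambda > 2/\sqrt3$ I would sharpen this to the strict inequality $\|x\|+\|y\| > \tfrac{2}{\sqrt3}\|x-y\|$. Because $\|o-x\|+\|o-y\| = \|x\|+\|y\|$, this says exactly that the origin does \emph{not} lie in $\mathcal E(xy) = \mathcal E_{2/\sqrt3}(xy)$.

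Next I would unpack the conclusion symmetrically. As $o \notin \{x,y\}$, the membership $o \in \tfrac{2\pi}{3}(xy)$ is by definition equivalent to $\angle xoy \ge 2\pi/3$, that is, to $\angle(x,y) \ge 2\pi/3$. Hence proving $\angle(x,y) < 2\pi/3$ is the same as proving $o \notin \tfrac{2\pi}{3}(xy)$. Finally, I would invoke Observation~\ref{observation: lens lies in the ellipse} in its contrapositive form: since $\tfrac{2\pi}{3}(xy) \subset \mathcal E(xy)$, the fact that $o \notin \mathcal E(xy)$ forces $o \notin \tfrac{2\pi}{3}(xy)$, which is exactly $\angle(x,y) < 2\pi/3$.

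There is essentially no obstacle to overcome; the only points requiring a little care are keeping the inequalities strict, so that the origin is strictly outside $\mathcal E(xy)$ rather than merely on its boundary (this is where $\lambda > 2/\sqrt3$ is used), and the harmless degenerate clause $\cup\,\{x,y\}$ in the definition of the lens, which is dispatched by the already-established fact that all vertices of $G$ differ from the origin.
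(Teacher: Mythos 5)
Your proof is correct and is essentially the paper's own argument: both read the absence of a red edge as $\lVert x\rVert+\lVert y\rVert\geq\lambda\lVert x-y\rVert$, use the slack $\lambda>2/\sqrt 3$ to place the origin strictly outside $\mathcal E(xy)$, and conclude via the contrapositive of Observation~\ref{observation: lens lies in the ellipse} (the paper phrases the slack as $\mathcal E(xy)$ lying in the interior of $\mathcal E_\lambda(xy)$, which is the same point). One pedantic correction: $\lVert x-y\rVert>0$ follows from $x\neq y$ as distinct points of the underlying point set, not from the vertices being distinct from the origin --- the latter fact is what you correctly use to dispatch the clause $o\in\{x,y\}$ in the definition of the lens.
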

\begin{proof}
    By Observation~\ref{observation: lens lies in the ellipse}, the $\alpha$-lens $\alpha(xy)$ with $\alpha=2\pi/3$ lies in the ellipse $\mathcal E(xy)$, which in turn is contained in the interior of $\mathcal E_\lambda(xy)$ as $\lambda>2/\sqrt{3}$. If the vertices $x$ and $y$ are disjoint or connected by a blue edge, then the origin lies out of the interior of $\mathcal E_\lambda(xy)$, and thus, out of $\alpha(xy)$. Hence $\angle (x,y)<2\pi/3$.
\end{proof}

Remark that if no red edge is incident to a vertex $x \in V(G)$, then for any vertex $y \in V(G)\setminus\{x\}$ we have
\[
    \angle (x, y) < 2\pi/3.
\]
Furthermore, the following lemma guarantees the existence of an alternating cycle in this case. Since its proof is routine, we show it in Subsection~\ref{proof of main lemma}.

\begin{lemma}\label{main_lemma}
If there is a vertex $x \in V(G)$ such that $\angle (x, y)<2\pi/3$ for any vertex $y \in V(G)\setminus\{x\}$, then the graph $G$ contains an alternating cycle.
\end{lemma}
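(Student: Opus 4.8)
The plan is to exhibit an alternating $4$-cycle explicitly, built from the two \emph{extreme} blue edges of $\mathcal M_2$. First I would normalize the picture: rotate so that the direction of the distinguished vertex $x$ has angle $0$, and for every vertex $v\in V(G)$ write $\varphi(v)\in(-2\pi/3,2\pi/3)$ for the signed angle of its direction measured counterclockwise from $x$, which is well defined by the hypothesis $\angle(x,v)<2\pi/3$. Two facts are recorded at the outset. Each blue edge $ab\in\mathcal M_2=E_b(G)$ satisfies $\angle(a,b)<2\pi/3$: indeed $o$ lies on the boundary of $\mathcal E_\lambda(ab)$, so there is no red edge between $a$ and $b$, and Corollary~\ref{corollary:no-red-edge} applies; dually, any pair $u,v$ with $\angle(u,v)\ge 2\pi/3$ is forced to be a red edge of $G$ (the contrapositive of the same corollary). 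Since $|\mathcal M_2|\in\{2,3\}$ there are at least two blue edges, and since $\ell_{ab}$ lies on the segment $ab$ its direction lies in the short arc between $a$ and $b$; hence every bisector also satisfies $\varphi(\ell_{ab})\in(-2\pi/3,2\pi/3)$.

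Next I would select the extreme edges. Let $e_-=a_-b_-$ and $e_+=a_+b_+$ be the blue edges whose bisectors attain, respectively, the minimum $\beta_-$ and the maximum $\beta_+$ of $\varphi(\ell_{ab})$ over $ab\in\mathcal M_2$. The membership $o\in\conv\{\ell_{ab}:ab\in\mathcal M_2\}$ from \eqref{equation o lies in the convex hull of l_ab} means the bisectors cannot all lie in an open half-plane through $o$; as they live in the cone $(-2\pi/3,2\pi/3)$ of width $4\pi/3<2\pi$, their minimal enclosing arc has length $\ge\pi$, so $D:=\beta_+-\beta_-\ge\pi$ and in particular $e_-\ne e_+$ (disjoint edges, giving four distinct vertices). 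Writing $\theta_\pm=\tfrac12\angle(a_\pm,b_\pm)<\pi/3$, let $u_\pm$ be the endpoint of $e_\pm$ with $\varphi=\beta_\pm+\theta_\pm$ and $w_\pm$ the one with $\varphi=\beta_\pm-\theta_\pm$. I then claim the $4$-cycle
\[
u_-\ \xrightarrow{\ e_-\ }\ w_-\ \xrightarrow{\ \mathrm{red}\ }\ w_+\ \xrightarrow{\ e_+\ }\ u_+\ \xrightarrow{\ \mathrm{red}\ }\ u_-
\]
is alternating: its two blue edges are $e_-,e_+\in E_b(G)$, and it remains to check that $\{u_-,u_+\}$ and $\{w_-,w_+\}$ are red.

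The heart of the matter is this pair of angle estimates, which is the step I expect to be the main (though routine) obstacle. The two relevant directed angular differences are $\varphi(u_+)-\varphi(u_-)=D+(\theta_+-\theta_-)$ and $\varphi(w_+)-\varphi(w_-)=D-(\theta_+-\theta_-)$. For the lower bound, $D\ge\pi$ together with $|\theta_+-\theta_-|<\pi/3$ shows each of them exceeds $\pi-\pi/3=2\pi/3$. For the upper bound, each is a difference of two values of $\varphi$, both confined to $(-2\pi/3,2\pi/3)$, hence strictly less than $4\pi/3$. A directed difference $s\in(2\pi/3,4\pi/3)$ yields undirected angle $\min(s,2\pi-s)>2\pi/3$, so $\angle(u_-,u_+)>2\pi/3$ and $\angle(w_-,w_+)>2\pi/3$; by the contrapositive of Corollary~\ref{corollary:no-red-edge} both pairs are red edges of $G$, and the displayed cycle is a genuine alternating cycle. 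Notably this single construction covers both $|\mathcal M_2|=2$ (where $D=\pi$, the bisectors being antipodal) and $|\mathcal M_2|=3$ (where $D>\pi$), so no case distinction on the size of $\mathcal M_2$ is needed.
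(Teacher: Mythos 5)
Your proof is correct, and at heart it realizes the same strategy as the paper: use \eqref{equation o lies in the convex hull of l_ab} to find two blue edges whose bisectors are nearly opposite, then use the half-angle bound $\theta_\pm<\pi/3$ (Corollary~\ref{corollary:no-red-edge} applied to blue edges, which are never red since \eqref{edge_sets} separates equality from strict inequality) to force both cross pairs to subtend an angle greater than $2\pi/3$, hence to be red, closing an alternating $4$-cycle. But your execution is genuinely different and slicker. The paper introduces the rays $r_1,r_2$ at $\pm2\pi/3$ from $x$, selects $a_1b_1$ and $a_2b_2$ through closed half-planes avoiding $\cone\{r_1,r_2\}$, needs an extremal re-selection to ensure $\measuredangle(\ell_{a_1b_1},\ell_{a_2b_2})\le\pi$, and verifies the second red edge by a two-case analysis on whether $x\in\cone\{a_1,b_2\}$. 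You instead take the global extremes $\beta_\pm$ in the chart $(-2\pi/3,2\pi/3)$, get the spread $D\ge\pi$ directly from convexity (if $D<\pi$, all the nonzero points $\ell_{ab}$ lie in an open half-plane, contradicting \eqref{equation o lies in the convex hull of l_ab}), and dispatch both red edges with the single symmetric estimate $D\pm(\theta_+-\theta_-)\in(2\pi/3,4\pi/3)$, the upper bound coming for free from chart confinement rather than from an extremal choice; this eliminates the case distinction and treats $|\mathcal M_2|=2$ and $3$ uniformly. Your supporting steps check out, including the key observation that the short arc between the endpoint directions of a blue edge has length $<2\pi/3$ and so cannot traverse the closed forbidden arc $[2\pi/3,4\pi/3]$ of length exactly $2\pi/3$, which legitimizes writing the endpoints of $e_\pm$ at $\beta_\pm\pm\theta_\pm$. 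One micro-point to state explicitly: $\ell_{ab}\ne o$ for $ab\in\mathcal M_2$ (otherwise $o$ lies on the segment $ab$ and $\lambda=1$, the same computation the paper uses to rule out $|\mathcal M_2|=1$), so the bisector directions are well defined.
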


\begin{figure}[ht!]
	\centering
	\includegraphics{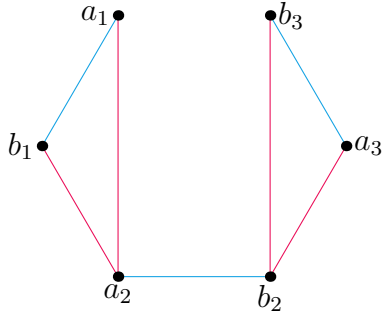}
	\caption{Graph without alternating cycles.}
	\label{fig:im4}
\end{figure}

By Lemma~\ref{main_lemma} and Corollary~\ref{corollary:no-red-edge}, we may assume that each vertex of the graph $G$ is incident to at least one red edge. Recall that $G$ has 4 or 6 vertices and its blue edges induce a perfect matching. Thus one can easily verify that $G$ either contains an alternative cycle or is isomorphic to the graph drawn on Figure~\ref{fig:im4}. Without loss of generality assume that the latter case holds.

\begin{figure}[htp]
    \centering
    \begin{subfigure}[b]{0.49\textwidth}
        \centering
        \includegraphics{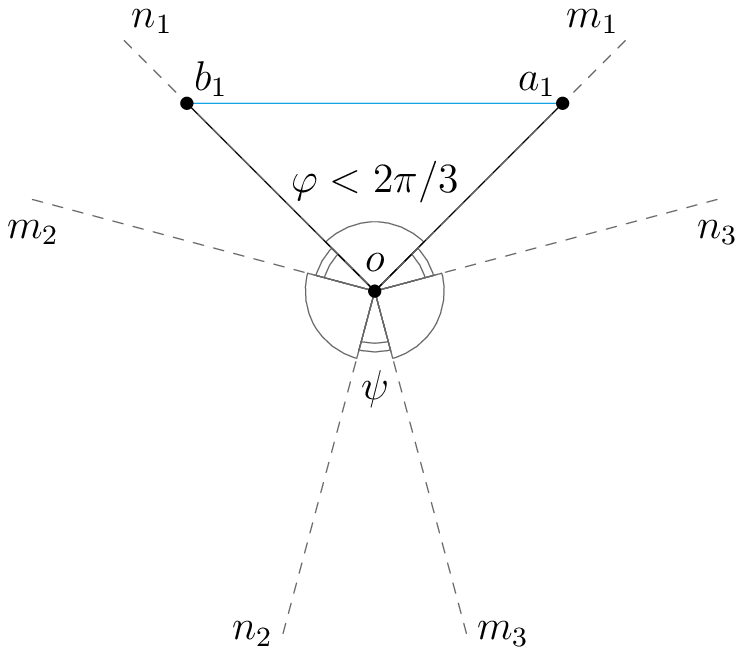}
    \caption{}
    \end{subfigure}
    \begin{subfigure}[b]{0.49\textwidth}
        \centering
        \includegraphics{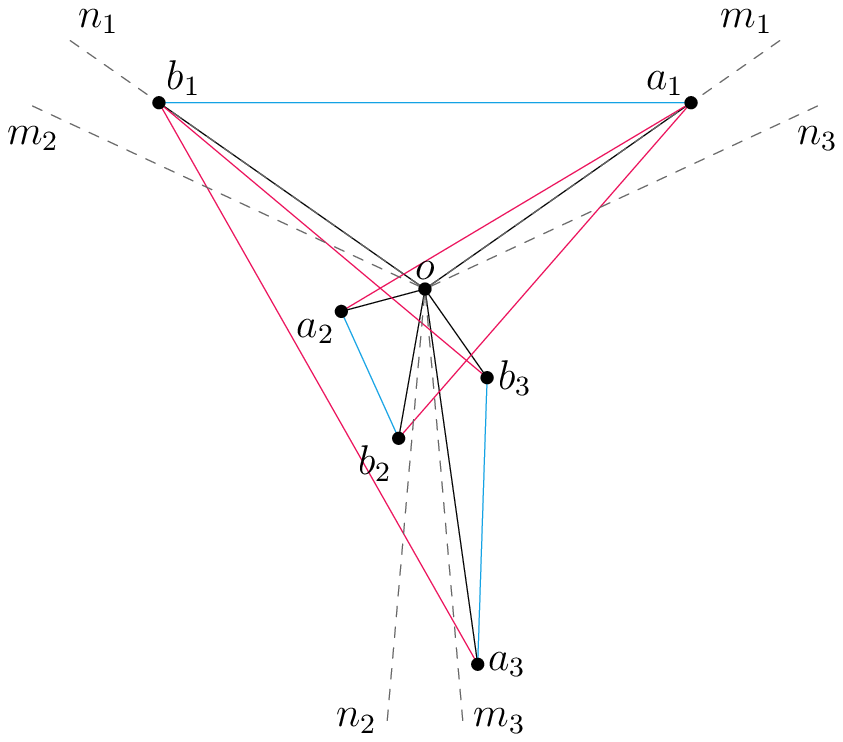}
        \caption{}
    \end{subfigure}
	\caption{Arrangement of the rays and the vertices of $G$.}
	\label{fig:im50}
\end{figure}

Additionally, we put $V(G)=\{a_1,a_2,a_3,b_1,b_2,b_3\}$,
\[
E_b(G) = \{a_1 b_1, a_2 b_2, a_3 b_3\} \text{ and }
E_r(G) = \{a_1 a_2, a_1 b_2, b_1 a_3, b_1 b_3\};
\]
see Figures~\ref{fig:im4} and~\ref{fig:im50}(b). Without loss of generality, we assume that, for all $i$, the directed angle $\measuredangle (a_i, b_i)$ does not exceed $\pi$.

Denote by $m_1$ and $n_1$ the rays emanating from the origin $o$ and passing through the points $a_1$
and $b_1$, respectively; see Figure~\ref{fig:im50}(a). Denote the rays obtained by the counterclockwise rotation of $m_1$ by angles $2\pi / 3$ and $4\pi / 3$ by $m_2$ and $m_3$, respectively. Analogously, define \(n_2\) and \(n_3\); see Figure~\ref{fig:im50}(a). Applying Corollary~\ref{corollary:no-red-edge} to the blue edge \(a_1 b_1\), we have $\angle (b_1, a_1) < 2\pi / 3$. Hence the rays \[
m_1, n_1, m_2, n_2, m_3, \text{ and } n_3
\]
are ordered counterclockwise and divide the plane into 6
regions such that 
\[
    \measuredangle (m_1, n_1)=\measuredangle (m_2, n_2) =\measuredangle (m_3, n_3)=\varphi
\]
and
\[  
    \measuredangle (n_1, m_2) = \measuredangle (n_2, m_3)= \measuredangle (n_3, m_1) =\psi
\]
with $\varphi+\psi=2\pi/3$ and $\psi > 0$;
see Figure~\ref{fig:im50}(a).

Using Corollary \ref{corollary:no-red-edge} for the vertex $a_2$, we have that
\[
    \angle (a_2, y) < 2\pi / 3.
\]
for any vertex $y \in V(G) \setminus \{a_1, a_2\}$.
Therefore, if $\angle (a_1, a_2)<2\pi/3$, we can apply Lemma~\ref{main_lemma} and find an alternating cycle. Hence \( \angle (a_1, a_2) \geq 2\pi / 3\). By Corollary~\ref{corollary:no-red-edge}, we get 
\[
    \angle (b_1, a_2)<2\pi/3 \text{ and thus, }a_2 \in \cone \{m_2, n_2\},
\]
where $\cone A$ stands for the conic hull of a set $A\subseteq \mathbb R^2$.
Analogously, we conclude that
\[
    b_2 \in \cone \{m_2, n_2\},\quad 
    a_3, b_3 \in \cone \{m_3, n_2\}.
\]
  
According to our notation, for each $i\in \{1,2,3\}$ there is the counterclockwise rotation by a non-negative angle that maps the ray $oa_i$ to the ray $ob_i$. 
Hence the rays 
\[
    m_1, n_1, m_2, oa_2, ob_2, n_2, m_3, oa_3, ob_3, n_3
\]
are ordered counterclockwise.

By Corollary~\ref{corollary:no-red-edge}, we have $\angle (a_2,b_3)<2\pi/3$. Since 
\[
    \measuredangle (n_3, m_2)= \measuredangle (n_3, m_1)+\measuredangle (m_1, n_1)+\measuredangle (n_1, m_2)=2\psi+\varphi=2\pi/3+\psi>2\pi/3,
\]
we conclude that $\measuredangle (a_2, b_3)<2\pi/3$, and thus, $b_2, a_3\in \cone\{a_1,b_3\}$; see Figure~\ref{fig:im50}.

The edges $a_2 b_2$ and $a_3 b_3$ belong to $\cone\{m_2, n_2\}$ and $\cone\{m_3, n_3\}$, respectively. Since these cones share only the origin, the line segments $a_2b_2$ and $a_3b_3$ do not intersect. Also, they induce a max-sum matching of the point set $\{a_2,a_3,b_2,b_3\}$. By the triangle inequality, any max-sum matching of four points in a convex position consists of two intersecting line segments. Hence either $b_2$ or $a_3$ lies in the convex hull of the remaining endpoints. Without loss of generality assume that
\[b_2 \in \conv \{a_2, b_3, a_3\}.\]
Since the vectors $a_2$ and $b_2$ are distinct, they are also not collinear.

\begin{figure}[ht!]
    \centering
	\begin{subfigure}[b]{0.45\textwidth}
        \centering
        \includegraphics{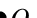}
    \end{subfigure}
    \begin{subfigure}[b]{0.45\textwidth}
        \centering
        \includegraphics{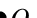}
    \end{subfigure}
	\caption{The point $b_2$ lies inside the triangle $a_2a_3b_3$.}
	\label{fig:imF}
\end{figure}

Denote by $p$ the intersection point of the ray $o b_2$ and the line segments $a_2 b_3$. Analogously, denote by $q$ the intersection point of the ray $ob_2$ and $a_2 a_3$; see Figure~\ref{fig:imF}. So the point $b_2$ lies on the line segment $pq$. Since the ray $ob_2$ does not belong to $\cone\{m_3,n_3\}$ containing $a_3$ and $b_3$, the points $p$ and $q$ are interior for the line segments $a_2b_3$ and $a_2a_3$, respectively.
\medskip

Finally, we finish the proof using the function $f: \{(x,y)\in \mathbb{R}^2 \times \mathbb{R}^2: x\ne y \} \xrightarrow{}  \mathbb{R}$ defined by
\[f(x, y) \coloneqq \frac{\lVert x \rVert + \lVert y \rVert}{\lVert x - y \rVert}.\]
Comparing the definition of $f$ with~\eqref{edge_sets}, we get
\[
    f(a_2, b_3) \geq \lambda,\quad 
    f(a_2, a_3) \geq \lambda,\text{ and }
    f(a_2, b_2) = \lambda.
\]

Recall that for non-collinear vectors $x,y\ne o$ the ray $o\ell_{x y}$ bisects $\cone \{x, y\}$, that is, $\angle (x, \ell_{xy})=\angle (\ell_{xy},y)$. Also, the point $\ell_{x y}$ lies on the line segment $x y$.
\begin{lemma}\label{F_lemma}
Let $x$ and $y$ be non-zero vectors in the plane. Then
\[
    f(x, y) = \dfrac{\lVert x \rVert}{\lVert x - \ell_{x y}\rVert}.
\]
\end{lemma}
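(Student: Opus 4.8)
The plan is to prove this by a direct substitution and factorization, since the claim is essentially an algebraic identity relating $f(x,y)$ to the distance from $x$ to the point $\ell_{xy}$ on the segment $xy$. The key observation is that $\ell_{xy}$ divides the segment $xy$ in a ratio governed by $\|x\|$ and $\|y\|$, so the vector $x-\ell_{xy}$ should be a scalar multiple of $x-y$, and once that scalar is identified the rest is immediate.

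First I would recall the definition
\[
    \ell_{xy}=\frac{\lVert y\rVert}{\lVert x\rVert+\lVert y\rVert}\,x+\frac{\lVert x\rVert}{\lVert x\rVert+\lVert y\rVert}\,y
\]
and compute the difference $x-\ell_{xy}$. Collecting the coefficient of $x$, I get $1-\frac{\lVert y\rVert}{\lVert x\rVert+\lVert y\rVert}=\frac{\lVert x\rVert}{\lVert x\rVert+\lVert y\rVert}$, which is exactly the negative of the coefficient of $y$. Hence the two terms share the common factor $\frac{\lVert x\rVert}{\lVert x\rVert+\lVert y\rVert}$ and I obtain the clean expression
\[
    x-\ell_{xy}=\frac{\lVert x\rVert}{\lVert x\rVert+\lVert y\rVert}\,(x-y).
\]

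Next I would take Euclidean norms of both sides. Since $\frac{\lVert x\rVert}{\lVert x\rVert+\lVert y\rVert}$ is a nonnegative scalar, it pulls out of the norm, giving
\[
    \lVert x-\ell_{xy}\rVert=\frac{\lVert x\rVert}{\lVert x\rVert+\lVert y\rVert}\,\lVert x-y\rVert.
\]
Dividing $\lVert x\rVert$ by this quantity cancels the factor $\lVert x\rVert$ and inverts the fraction, yielding $\frac{\lVert x\rVert+\lVert y\rVert}{\lVert x-y\rVert}$, which is precisely $f(x,y)$ by definition.

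I do not expect any genuine obstacle here: the statement is a short computational identity, and the only point requiring a little care is verifying that the coefficients of $x$ and $y$ in $x-\ell_{xy}$ collapse to the single common factor $\frac{\lVert x\rVert}{\lVert x\rVert+\lVert y\rVert}$. The hypotheses $x\ne o$ and $y\ne o$ (and implicitly $x\ne y$, so that $f$ is defined) guarantee that every denominator appearing above is nonzero, so no degenerate case analysis is needed.
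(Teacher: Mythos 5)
Your computation is correct, and it takes a genuinely different route from the paper. The paper's proof is synthetic: it invokes the angle bisector theorem in the triangle $oxy$ to get $\lVert x\rVert/\lVert x-\ell_{xy}\rVert = \lVert y\rVert/\lVert \ell_{xy}-y\rVert$, then uses the mediant property of equal ratios together with $\lVert x-y\rVert = \lVert x-\ell_{xy}\rVert + \lVert \ell_{xy}-y\rVert$ (valid since $\ell_{xy}$ lies on the segment $xy$) to conclude. You instead compute directly from the defining formula for $\ell_{xy}$, obtaining the exact vector identity
\[
x-\ell_{xy}=\frac{\lVert x\rVert}{\lVert x\rVert+\lVert y\rVert}\,(x-y),
\]
and take norms. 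Your argument is more elementary and slightly more general: it needs no geometric input and works uniformly even when $x$ and $y$ are collinear (e.g.\ $y$ a negative multiple of $x$), a degenerate configuration in which the angle bisector picture collapses but which the lemma's hypotheses do not exclude; it also produces the explicit vector identity, which is strictly stronger than the norm statement. What the paper's route buys is conceptual continuity: the bisector interpretation of $\ell_{xy}$ is exactly what is exploited immediately afterwards in Corollary~\ref{F_cor}, where monotonicity of $\lVert x-\ell_{xz}\rVert$ is read off from the angle $\angle(x,\ell_{xz})$, so proving the lemma via the bisector theorem keeps the whole passage in one geometric language. Your handling of the hypotheses (noting that $x\ne o$, $y\ne o$, and implicitly $x\ne y$ keep all denominators nonzero) is exactly right.
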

\begin{proof}
By the angle bisector theorem, we have
\[
    \dfrac{\lVert x \rVert}{\lVert x - \ell_{x y}\rVert} = \dfrac{\lVert y \rVert}{\lVert \ell_{x y} - y \rVert}.
\]
Hence
\[
    f(x, y) = \frac{\lVert x \rVert + \lVert y \rVert}{\lVert x - y \rVert} = \frac{\lVert x \rVert + \lVert y \rVert}{\lVert x - \ell_{x y} \rVert + \lVert \ell_{x y} - y \rVert} = \dfrac{\lVert x \rVert}{\lVert x - \ell_{x y}\rVert},
\]
which finishes the proof.
\end{proof}

\begin{corollary}\label{F_cor}
Let $x$ and $y$ be non-zero and non-collinear vectors in the plane. For any interior point $z$ of the line segment $x y$, we have
\[
    f(x, z) > f(x, y).
\]
\end{corollary}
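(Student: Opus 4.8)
The plan is to invoke Lemma~\ref{F_lemma} for both pairs $(x,y)$ and $(x,z)$ in order to convert the desired strict inequality into a comparison of two lengths measured along the line through $x$ and $y$. Since $z$ is an interior point of the segment $xy$, it lies on this line, and because $x$ and $y$ are non-collinear the origin does not lie on the segment $xy$; in particular $z\ne o$ and $x,z$ are non-collinear (as the only common point of the lines $xy$ and $ox$ is $x$ itself), so $f(x,z)$ is well defined and $\ell_{xz}$ enjoys the angle-bisecting property. Applying Lemma~\ref{F_lemma} gives
\[
    f(x,y)=\frac{\lVert x\rVert}{\lVert x-\ell_{xy}\rVert}
    \quad\text{and}\quad
    f(x,z)=\frac{\lVert x\rVert}{\lVert x-\ell_{xz}\rVert}.
\]
Since the numerators coincide, it suffices to prove the strict inequality $\lVert x-\ell_{xz}\rVert<\lVert x-\ell_{xy}\rVert$.

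Both points $\ell_{xy}$ and $\ell_{xz}$ lie on the line through $x$ and $y$, so I would reduce the claim to showing that $\ell_{xz}$ is strictly closer to $x$ than $\ell_{xy}$ is. The key observation is that as a ray emanating from $o$ rotates inside $\cone\{x,y\}$ from the direction of $x$ toward the direction of $y$, its intersection with the segment $xy$ moves monotonically from $x$ to $y$; equivalently, a ray making a smaller angle with $ox$ meets the segment closer to $x$. Because $z$ is an interior point of $xy$, we have $z=(1-s)x+sy$ with $s\in(0,1)$, so the ray $oz$ lies strictly inside the cone and $\angle(x,z)<\angle(x,y)$. Using the recalled bisecting property $\angle(x,\ell_{xy})=\tfrac12\angle(x,y)$ and likewise $\angle(x,\ell_{xz})=\tfrac12\angle(x,z)$, I obtain $\angle(x,\ell_{xz})<\angle(x,\ell_{xy})$. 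The monotonicity just described then yields $\lVert x-\ell_{xz}\rVert<\lVert x-\ell_{xy}\rVert$, and hence $f(x,z)>f(x,y)$, as desired.

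The only step requiring genuine care is the monotonicity claim, which I expect to be the main (though modest) obstacle. I would justify it by working in the triangle $oxy$: the ray from the vertex $o$ to a point moving along the opposite side $xy$ sweeps its direction strictly monotonically, so the correspondence between the angle a ray makes with $ox$ and the distance from $x$ to its intersection with the segment is strictly increasing. Once this is in hand, everything else is a direct substitution into Lemma~\ref{F_lemma} together with the angle comparison above.
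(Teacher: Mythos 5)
Your proposal is correct and follows essentially the same route as the paper: both apply Lemma~\ref{F_lemma} to the pairs $(x,y)$ and $(x,z)$, deduce $\angle(x,\ell_{xz})<\angle(x,\ell_{xy})$ from $z$ being interior to the segment $xy$, and conclude $\lVert x-\ell_{xz}\rVert<\lVert x-\ell_{xy}\rVert$, hence $f(x,z)>f(x,y)$. Your version merely spells out two details the paper leaves implicit (well-definedness of $f(x,z)$ and the monotonicity of the angle--distance correspondence along the segment), both of which you justify correctly.
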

\begin{proof}
Since $z$ is an interior point of the line segment $xy$, we have $\angle (x,\ell_{xy})> \angle (x,\ell_{xz})$, and thus,
\[
    \lVert x - \ell_{x y}\rVert > \lVert x - \ell_{x z}\rVert.
\]

By Lemma \ref{F_lemma}, we conclude
\[
    f(x, y) = \dfrac{\lVert x \rVert}{\lVert x - \ell_{x y}\rVert} < \dfrac{\lVert x \rVert}{\lVert x - \ell_{x z}\rVert} = f(x, z),
\]
which finishes the proof.
\end{proof}
Since $p$ and $q$ are interior points of the line segments $a_2 b_3$ and $a_2 a_3$, respectively, Corollary~\ref{F_cor} yields that
\[
    f(a_2, p) > f(a_2, b_3) \geq \lambda \text{\ \ and\ \ } f(a_2, q) > f(a_2, a_3) \geq \lambda.
\]
Next, we show that these inequalities contradict the equality $f(a_2,b_2)=\lambda$.

Since the points $p$ and $q$ lie on the ray $ob_2$, we conclude that the points $\ell_{a_2 p}$ and $\ell_{a_2q}$ lie on the ray $o\ell_{a_2 b_2}$. Moreover, the point $\ell_{a_2b_2}$ belongs to the line segment $\ell_{a_2p}\ell_{a_2q}$, and thus,
\[
    \lVert a_2 - \ell_{a_2 b_2}\rVert \leq \max \big\{\lVert a_2 - \ell_{a_2 p}\rVert,  \lVert a_2 - \ell_{a_2 q}\rVert \big \}.
\]
By Lemma \ref{F_lemma}, we have
\begin{align*}
    \lambda=f(a_2, b_2) = \dfrac{\lVert a_2 \rVert}{\lVert a_2 - \ell_{a_2 b_2}\rVert}  &\geq \min \left\{\dfrac{\lVert a_2 \rVert}{\lVert a_2 - \ell_{a_2 p}\rVert} ,  \dfrac{\lVert a_2 \rVert}{\lVert a_2 - \ell_{a_2 q}\rVert}  \right\} \\
    &=  \min \big\{f(a_2, p),  f(a_2, q) \big\} > \lambda,
\end{align*}
a contradiction. Therefore, the graph $G$ contains an alternating cycle, which finishes the proof of Theorem~\ref{theorem: fingerhut holds}. \hfill \(\square\)
\subsection{Proof of Lemma \ref{main_lemma}}
\label{proof of main lemma}
\begin{figure}[ht!]
    \centering
    \includegraphics{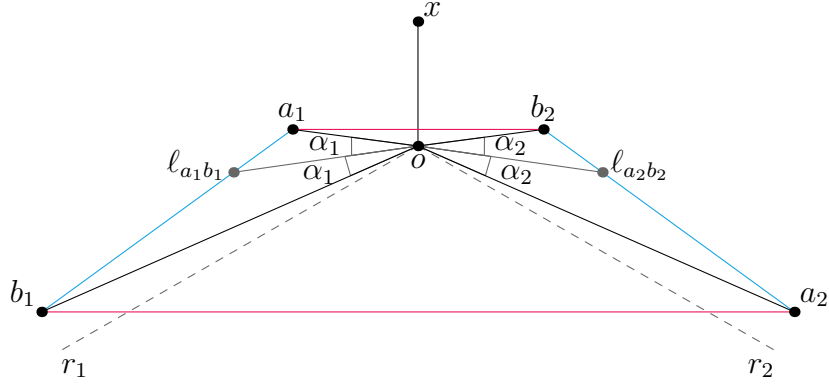}
    \caption{Arrangement of points in the proof of Lemma \ref{main_lemma}.}
    \label{fig:im3}
\end{figure}

Denote by $r_1$ and $r_2$ the rays emanating from the origin such that $\measuredangle (ox, r_1)=\measuredangle (r_2, ox)=2\pi/3$. Thus, we have $\measuredangle (r_1, r_2)=2\pi/3$ as well. Since $\angle (x,y)<2\pi/3$ for any $y\in V(G)\setminus \{x\}$, we conclude that no vertex of $G$ lies in $\cone\{r_1,r_2\}$. 

Let us show that $\ell_{ab}\not\in \cone\{r_1,r_2\}$ for any blue edge $ab\in \mathcal M_2$. By Corollary~\ref{corollary:no-red-edge}, we have $\angle (a,b)<2\pi/3$. Thus, $\cone\{r_1,r_2\}$ and $\cone \{a,b\}$ share only the origin $o$ as $\angle (r_1, r_2)=2\pi/3>\angle (a,b)$. By the definition, the point $\ell_{ab}$ lies on a line segment $ab$, and so, $\ell_{ab}\in \cone \{a,b\}$. Therefore, the point $\ell_{ab}$ does not lie in the $\cone \{r_1,r_2\}$, and thus, $\angle (x,\ell_{ab})<2\pi/3$.

By~(\ref{equation o lies in the convex hull of l_ab}), any closed half-plane bounded by a line through the origin must contain $\ell_{ab}$ for some $ab\in \mathcal M_2$. Among the closed half-planes bounded by the line containing the ray $r_2$, choose the one without $x$. This half-plane must contains $\ell_{ab}$ for some $ab\in \mathcal M_2$. Since no point $\ell_{ab}$ lies in $\cone\{r_1, r_2\}$, we conclude that there is a blue edge $a_1b_1$ such that
\begin{equation}
    \label{equation:a1b1 bounds for oriented angles}
        \pi/3 \leq \measuredangle (x,\ell_{a_1b_1})<2\pi/3;
\end{equation}
see Figure~\ref{fig:im3}.
Analogously, we can find an edge $a_2b_2\in \mathcal M_2$ satisfying the inequality
\begin{equation}
    \label{equation:a2b2 bounds for oriented angles}
    \pi/3\leq \measuredangle (\ell_{a_2b_2},x)< 2\pi/3.
\end{equation}

Moreover, by~(\ref{equation o lies in the convex hull of l_ab}) and the above argument on closed half-planes, we can choose $a_1b_1$ and $a_2b_2$ in such a way that
\begin{equation}
\label{equation:angle between a1b1 and a2b2 less than pi}
    \measuredangle (\ell_{a_1b_1},\ell_{a_2b_2})\leq \pi.
\end{equation}
For that, it is enough to choose the edges $a_1b_1$ and $a_2b_2$ satisfying (\ref{equation:a1b1 bounds for oriented angles}) and (\ref{equation:a2b2 bounds for oriented angles}), respectively, with the maximum possible values of $\measuredangle(x,\ell_{a_2b_2})$ and $\measuredangle (\ell_{a_1b_1},x)$. (Here, we use that no point $\ell_{ab}$ lies in $\cone\{r_1,r_2\}$.)

Without loss of generality assume that for $i\in \{1,2\}$, we have
\[
    \measuredangle (a_i, b_i) = 2\alpha_i,
\]
where $0\leq 2\alpha_i<\pi$. Moreover, Corollary~\ref{corollary:no-red-edge} yields that $\alpha_i<\pi/3$. Hence for $i\in \{1,2\}$, we obtain
\[
    \measuredangle (a_i, \ell_{a_ib_i})=\measuredangle (\ell_{a_ib_i},b_i)=\alpha_i<\pi/3.
\]
Since $b_1\not\in \cone \{r_1, r_2\}$, we have that $b_1\in \cone\{\ell_{a_1b_1},r_1\}$. Analogously, we conclude that $a_2\in \cone \{r_2, \ell_{a_2b_2}\}$. Therefore,
\[
    2\pi/3=\measuredangle (r_1, r_2) < \measuredangle (b_1,a_2) \leq \measuredangle (\ell_{a_1b_1}, \ell_{a_2b_2})\leq \pi,
\]
and so, by Corollary~\ref{corollary:no-red-edge}, the edge $b_1a_2$ is red.
\medskip

To finish the proof, we show that $a_1b_2$ is also red edge.

First, we claim that $a_1\in\cone \{\ell_{a_1b_1},x\}$ and it does not lie on the ray $ox$. Indeed, it follows from 
\[
\measuredangle (a_1, \ell_{a_1,b_1})=\alpha_1<\pi/3\leq \measuredangle(x,\ell_{a_1b_1}).
\]
Analogously, we get that $b_2\in \cone \{x, \ell_{a_2,b_2}\}$ and it does not lie on the ray $ox$. Thus the vertex $x$ is distinct from $b_1$ and $a_2$.

Second, by~(\ref{equation:angle between a1b1 and a2b2 less than pi}), we have
\begin{equation}
    \label{equation: sum of angle less pi }
    \alpha_1+\alpha_2+\angle(b_1,a_2)=\measuredangle (\ell_{a_1b_1},a_1)+\measuredangle (b_1,a_2) + \measuredangle (a_2, \ell_{a_2b_2})=\measuredangle (\ell_{a_1b_1},\ell_{a_2b_2})\leq \pi.
\end{equation}
Since $\angle (b_1, a_2)> \angle (r_1,r_2)=2\pi/3$, we conclude that
\begin{equation}
\label{equation sum of alpha1 and alpha2 less than pi/3} 
    \alpha_1+\alpha_2<\pi/3
\end{equation}
There are two possible cases.

\texttt{1.} The point $x$ does not lie in $\cone \{a_1,b_2\}$. Then either $\angle (a_1,b_2)=\pi$ or $\cone \{a_1,b_2\}$ contains $\cone \{r_1,r_2\}$. In the latter case, we have $\angle (a_1,b_2)\geq 2\pi/3$. Therefore, by Corollary~\ref{corollary:no-red-edge}, the edge $a_1b_2$ is red.

\texttt{2.} The point $x$ lies in $\cone \{a_1,b_2\}$. So, we get
    \begin{align*}
        \angle(b_2,a_1)&=2\pi - \angle (a_1,b_1)-\angle (b_1, a_2)-\angle (a_2,b_2)\\
        &=2\pi - \alpha_1-\alpha_2 -\big(\alpha_1+\alpha_2+\angle(b_2,a_1)\big)\\
        &>2\pi -\pi/3 -\pi=2\pi/3.
    \end{align*}
    Here the last inequality follows from (\ref{equation: sum of angle less pi }) and (\ref{equation sum of alpha1 and alpha2 less than pi/3}). By Corollary~\ref{corollary:no-red-edge}, the edge $a_1b_2$ is red.

Therefore, in these two cases, we find the alternating cycle $a_1b_1a_2b_2$, which finishes the proof of Lemma~\ref{main_lemma}. \hfill \(\square\).

\subsection*{Conflict of interests} Not applicable.
\subsection*{Data availability} Data sharing is not applicable to this article as no datasets were generated or analysed during the current study.
\subsection*{Code availability} Not applicable. 
\subsection*{Acknowledgments.} We are grateful to Fedor Gerasimov, Olimjoni Pirahmad, and the members of the Laboratory of Combinatorial and Geometric Structures at MIPT for the stimulating and fruitful discussions.

\bibliographystyle{siam}
\bibliography{biblio}

@article{Tverberg1993,
  doi = {10.1006/eujc.1993.1029},
  url = {https://doi.org/10.1006/eujc.1993.1029},
  year = {1993},
  month = may,
  publisher = {Elsevier {BV}},
  volume = {14},
  number = {3},
  pages = {259--264},
  author = {Helge Tverberg and Sini{\v{s}}a Vre{\'{c}}ica},
  title = {{On Generalizations of Radon{\textquotesingle}s Theorem and the Ham Sandwich Theorem}},
  journal = {European Journal of Combinatorics}
}

@InProceedings{Affash2022piercing,
author="Abu-Affash, A. Karim
and Carmi, Paz
and Maman, Meytal",
editor="Lin, Chun-Cheng
and Lin, Bertrand M. T.
and Liotta, Giuseppe",
title="Piercing Diametral Disks Induced by Edges of Maximum Spanning Trees",
booktitle="WALCOM: Algorithms and Computation",
year="2023",
publisher="Springer Nature Switzerland",
address="Cham",
pages="71--77",
note = {DOI: \url{https://doi.org/10.1007/978-3-031-27051-2_7}. Available at \url{https://arxiv.org/abs/2209.11260}}
}

@article{Pirahmad2022,
    year={2022},
    author = {Pirahmad, Olimjoni and Polyanskii, Alexandr and Vasilevskii, Alexey},
    title = {Intersecting Diametral Balls Induced by a Geometric Graph}, 
    journal = {Discrete \& Computational Geometry},
    doi ={10.1007/s00454-022-00457-x},
    note = {DOI: \url{https://doi.org/10.1007/s00454-022-00457-x}. Available at \url{https://arxiv.org/abs/2108.09795}}
}

@article{Roudneff2001,
  doi = {10.1006/eujc.2000.0493},
  url = {https://doi.org/10.1006/eujc.2000.0493},
  year = {2001},
  month = jul,
  publisher = {Elsevier {BV}},
  volume = {22},
  number = {5},
  pages = {733--743},
  author = {Jean-Pierre Roudneff},
  title = {{Partitions of points into simplices with $k$-dimensional intersection. Part I: the conic Tverberg's theorem}},
  journal = {European Journal of Combinatorics}
}

@article{Tverberg1966,
  doi = {10.1112/jlms/s1-41.1.123},
  url = {https://doi.org/10.1112/jlms/s1-41.1.123},
  year = {1966},
  publisher = {Wiley},
  volume = {s1-41},
  number = {1},
  pages = {123--128},
  author = {H. Tverberg},
  title = {{A generalization of Radon's theorem}},
  journal = {Journal of the London Mathematical Society}
}

@article{soberon2020tverberg,
  title={{Tverberg’s theorem, disks, and Hamiltonian cycles}},
  author={Sober{\'o}n, Pablo and Tang, Yaqian},
  journal={Annals of Combinatorics},
  volume={25},
  number={4},
  pages={995--1005},
  year={2021},
  publisher={Springer},
  note={Available at \url{https://arxiv.org/abs/2011.12218}}
}

@article{huemer2019matching,
  title={Matching points with disks with a common intersection},
  author={Huemer, Clemens and P{\'e}rez-Lantero, Pablo and Seara, Carlos and Silveira, Rodrigo I},
  journal={Discrete Mathematics},
  volume={342},
  number={7},
  pages={1885--1893},
  year={2019},
  publisher={Elsevier},
  note={Available at \url{https://arxiv.org/abs/1902.08427}}
}

@article{bereg2019maximum,
  title={On maximum-sum matchings of points},
  author={Bereg, Sergey and Chac{\'o}n-Rivera, Oscar P and Flores-Pe{\~n}aloza, David and Huemer, Clemens and P{\'e}rez-Lantero, Pablo and Seara, Carlos},
  journal={Journal of Global Optimization},
  pages={111--128},
  year={2023},
  volume={85},
  publisher={Springer},
  note={DOI: \url{https://doi.org/10.1007/s10898-022-01199-z}. Available at \url{https://arxiv.org/abs/1911.10610}}
}

@article{dumitrescu2012drawing,
  title={{Drawing Hamiltonian Cycles with no Large Angles}},
  author={Dumitrescu, Adrian and Pach, J{\'a}nos and T{\'o}th, G{\'e}za},
  journal={The Electronic Journal of Combinatorics},
  volume={19},
  number={2},
  pages={P31},
  year={2012}
}

@book{HiriartUrruty1993,
  doi = {10.1007/978-3-662-02796-7},
  url = {https://doi.org/10.1007/978-3-662-02796-7},
  year = {1993},
  publisher = {Springer Berlin Heidelberg},
  author = {Jean-Baptiste Hiriart-Urruty and Claude Lemar{\'{e}}chal},
  title = {{Convex Analysis and Minimization Algorithms I}}
}

@misc{Eppstein1995,
  title = {{Geometry Junkyard. Centers of maximum matchings}},
  author = {David Eppstein},
  year = {1995},
  note = {Available at \url{https://www.ics.uci.edu/~eppstein/junkyard/maxmatch.html}}
}

@article{fekete2000minimum,
  title={On minimum stars and maximum matchings},
  author={Fekete, Sandor P. and Meijer, Henk},
  journal={Discrete \& Computational Geometry},
  volume={23},
  number={3},
  pages={389--407},
  year={2000},
  publisher={Springer}
}

@misc{Suri1998,
  title = {{Problem 5. Problem session of the 14th ACM Symposium on Computational Geometry}},
  author = {S. Suri},
  year = {1998},
  note = {Available at \url{https://users.cs.duke.edu/~pankaj/scg98-openprobs/open-probs.html}}
}

\end{document}